\def\markboth#1#2{\def\leftmark{\@IEEEcompsoconly{\sffamily}\MakeUppercase{\protect#1}}%
\def\rightmark{\@IEEEcompsoconly{\sffamily}\MakeUppercase{\protect#2}}}
\def\func1(#1,#2){exp(-(#1-#2)^2/(0.01))}
\tikzset{external/system call={pdflatex \tikzexternalcheckshellescape 
                                        -halt-on-error
                                        -interaction=batchmode 
                                        -jobname "\image" "\texsource"
                                        && inkscape \image.pdf --export-eps=\image.eps --export-ps-level=3}}
\newcommand{\overbw}[1]{\frac{#1}{N^{\frac{1}{2} +\alpha}}}
\newcommand{\setL}{\mathcal{L}}
\newcommand{\setN}{\mathcal{N}}
\newcommand{\setB}{\mathcal{B}}
\newcommand{\setS}{\mathcal{S}}
\newcommand{\U}{\mathbf{U}}
\newcommand{\Ex}[2]{{\textnormal{E}_{#1}\left[#2\right]}}
\newcommand{\CInf}[3]{{\textnormal{I}\left(#1;#2|#3\right)}}
\newcommand{\Inf}[2]{{\textnormal{I}\left(#1;#2\right)}}
\theoremstyle{plain}
\newtheorem{remark}{Remark}
   \definecolor{blueH3}{rgb}{0,.5,1}
   \definecolor{blueH2}{rgb}{0,0.25,0.75}
   \definecolor{blueH1}{rgb}{0,0,0.5}   
   \definecolor{grayOldText}{rgb}{.5,.5,.5}
   \definecolor{VCobalt}{HTML}{005682}
   \definecolor{TZTeal}{HTML}{008080}
   \definecolor{KYJade}{HTML}{008151}
   \definecolor{ARust}{HTML}{a10000}
   \definecolor{FFucsia}{HTML}{7000c3}
\newcommand{\CASE}[1]{\STATE \textbf{case} #1\textbf{:} \begin{ALC@g}}
\newcommand{\ENDCASE}{\end{ALC@g}}
\newcommand{\DEFAULT}{\STATE \textbf{default:} \begin{ALC@g}}
\newcommand{\ENDDEFAULT}{\end{ALC@g}}
\newcommand{\DEFAULTLINE}[1]{\STATE \textbf{default:} }
\newcounter{MYtempeqncnt}
\theoremstyle{newstyle}
\newtheorem{theorem}{Theorem}
\newtheorem{lemma}{Lemma}
\newtheorem{corollary}{Corollary}
\theoremstyle{plain}
\newtheorem{definition}{Definition}
\title{Capacity scaling in a Non-coherent Wideband Massive SIMO Block Fading Channel}
\author{Felipe Gomez-Cuba, ~\IEEEmembership{Member,~IEEE}, Mainak Chowdhury, ~\IEEEmembership{Member,~IEEE}, Alexandros Manolakos, Elza Erkip,~\IEEEmembership{Fellow,~IEEE}, and Andrea J. Goldsmith,~\IEEEmembership{Fellow,~IEEE} \thanks{
    Part of this work was published in the IEEE Information Theory Workshop 2015 \cite{mainak2015wideband}.
    Felipe Gomez-Cuba is with Dipartimento D'engegneria della Informazione, University of Padova, Italy. Email: \texttt{gomezcuba@dei.unipd.it}
    Mainak Chowdhury is with ZaiNar, Inc. Email: \texttt{mainakch@gmail.com}
    Alexandros Manolakos is with Qualcomm Technologies, Inc. - Wireless Research and Development (R\&D). Email: \texttt{amanolak@qti.qualcomm.com}
    Elza Erkip is with NYU Tandon School of Engineering, Brooklyn, NY, USA. Email: \texttt{elza@nyu.edu}.
    Andrea J. Goldsmith is with the Department of Electrical Engineering, Stanford University, Stanford, CA, US Email: \texttt{andreag@stanford.edu}.
This project has received funding from the European Union's Horizon 2020 research and innovation programme under the Marie Sk\l{}odowska-Curie grant agreement No 704837, from ONR grant N000141210063, and from the NSF Center for Science of Information (CSoI) under grant CCF-0939370.}}
\begin{document}

\maketitle

\begin{abstract}
  The scaling of coherent and non-coherent channel capacity is studied
  in a single-input multiple-output (SIMO) block Rayleigh fading
  channel as both the bandwidth and the number of receiver
  antennas go to infinity jointly with the transmit power fixed. 
  The transmitter has no channel state information (CSI), while the
  receiver may have genie-provided CSI (coherent receiver), or the
  channel statistics only (non-coherent receiver).  Our results show
  that if the available bandwidth is smaller than a threshold
  bandwidth which is proportional (up to leading order terms) to the
  square root of the number of antennas, there is no gap between the
  coherent capacity and the non-coherent capacity in terms of capacity
  scaling behavior. On the other hand, when the bandwidth
  is larger than this threshold, there is a capacity scaling gap.  Since achievable rates using
  pilot symbols for channel estimation are subject to the non-coherent capacity
  bound, this work reveals that pilot-assisted coherent receivers in systems 
  with a large number of receive antennas are unable to exploit excess
  spectrum above a given threshold for capacity gain.
\end{abstract}
\begin{IEEEkeywords}
  Massive MIMO, wideband channel, non-coherent Communications, Energy  modulation
\end{IEEEkeywords}

\section{Introduction} \label{sec:introduction}

  Spectral efficiency continues to be an important driver for next
  generation wireless communication systems, including 5G cellular.
  Hence, communication techniques using large antenna arrays and large
  bandwidths have attracted significant attention from both industry
  and academia. 
  Intuitively, increasing the number of antennas and/or the operating
  bandwidth increases the number of signal dimensions in a
  communication system, enabling improved performance. However, an
  accurate knowledge of the wireless channel parameters at the receiver is
  essential to achieving this performance.  Obtaining knowledge of the
  instantaneous channel state information (CSI) at the receiver
  becomes increasingly difficult when \textit{either} the number of
  antennas or the bandwidth is large.  This has motivated much work
  investigating channel capacity under varying degrees of receiver
  CSI, especially in multi-antenna or wideband systems.

  The impact of receiver CSI  on channel capacity may be studied by
  comparing the capacity of a channel under the assumption that the
  CSI is perfectly known at the receiver (coherent capacity) with the
  capacity of a channel where the CSI is unknown
  (non-coherent capacity). The coherent capacity has been well
  characterized for many point-to-point channels, including those
  arising in the context of a multi-antenna or a wideband system
  \cite{goldsmith2005book}. Results for the capacity of the
  non-coherent channel are more elusive. Existing works on
  non-coherent channel capacity include
  \cite{Marzetta1999,1337103,Perera2006,Lozano2008,Sethuraman2009,Jindal2010,7541626},
  which characterize properties of the capacity, and the capacity-achieving
  distribution.  Some of these works also establish upper and lower
  capacity bounds for certain channel models.

  In addition to capacity expressions and bounds, there has been some
  recent work which has characterized the asymptotic behavior of the
  capacity of various noncoherent channels. Two asymptotic regimes
  that are relevant to our analysis in this manuscript include
  characterizations of the non-coherent capacity achieving schemes
  both in asymptotically wideband systems with a fixed number of
  antennas
  \cite{journals/tit/TelatarT00,journals/tit/MedardG02,journals/tit/Verdu02,Medard2005,Zheng2007noncoherent,Ray2007noncoherent,journals/twc/LozanoP12,Ferrante2016,fgomezUnified,Du2017},
  and in narrowband systems with an asymptotically large number of
  antennas
  \cite{journals/ftcit/TulinoV04,marzetta2006much,marzetta2013special,Bjornson2014a,Bjornson2016,zhengoptimal,866662,hoydis2018}.

  A major focus in the study of wideband systems with a finite number
  of antennas is the inherent capacity gap between coherent and
  non-coherent systems under constraints on the signal ``peakiness''
  \cite{journals/tit/TelatarT00,journals/tit/MedardG02,journals/tit/Verdu02,Medard2005,Zheng2007noncoherent,Ray2007noncoherent,journals/twc/LozanoP12,Ferrante2016,fgomezUnified,Du2017}.
  To narrow or close this gap, it is necessary to use either signals
  with unbounded peakiness
  \cite{Medard2005,Zheng2007noncoherent,Ray2007noncoherent} or to reduce
  the bandwidth occupancy of the signal by restricting the transmitted
  signal power to a small subset of the available spectrum
  \cite{journals/twc/LozanoP12,fgomezUnified}.  In either case, even
  though the rate gap is closed with increasing peakiness, the
  non-coherent receivers have a  \textit{second order
    suboptimality} \cite{journals/tit/Verdu02} compared to coherent receivers in terms of capacity
  growth with increasing bandwidth.

  On the other hand, a significant challenge in narrowband systems
  with large antenna arrays (e.g., in multiuser massive Multiple Input Multiple Output (MIMO) systems)
  is channel estimation \cite{marzetta2006much,Bjornson2014a}.  If the receiver has perfect CSI, the degrees of freedom (DoF) in an i.i.d. Rayleigh channel is the minimum of the number of transmit and receive antennas.  This suggests that, if the number of receiver antennas is high and the number of transmit antennas is increased, the DoF, and hence the overall capacity, increases.   
  Without perfect CSI, however, the DoF cannot be increased arbitrarily by having more transmit antennas \cite{866662}. In fact in a non-coherent Rayleigh fading channel with the number of transmit antennas greater than the coherence length, it has been shown that the capacity does not increase at all by adding more transmit antennas \cite{Marzetta1999}. This implies that increasing the number of transmit antennas cannot make the capacity grow indefinitely in the non-coherent channel as in the coherent channel. The capacity can still grow logarithmically with the number of receive antennas in the non-coherent channel \cite{manolakos2014globecom,Chowdhury2014a}. 
  In multiple-user scenarios where the number of users remains constant as the number of antennas scales a similar scaling limitation exists. Motivated by this, in the remainder of the manuscript, we focus on a single user system with a single transmit antenna (point-to-point SIMO channel). With regard to our main result, in the last section of the paper we argue that our main result on capacity scaling for a point-to-point SIMO channel can be extended to MIMO channels and multiuser systems with a finite number of users/transmitters.

  In this work, we characterize the capacity scaling of the
  \textit{block fading wideband massive SIMO channel} and investigate
  the gap in a scaling law sense between the coherent and non-coherent
  capacities of this channel as both the bandwidth and the number of receiver antennas
  jointly scale to infinity. A major difference of our analysis from
  existing wideband coherent vs. non-coherent capacity comparisons summarized above  
  is that we assume both the number of receive antennas and the
  bandwidth are very large. Assuming both dimensions scale to infinity reveals the following
  challenge in applying prior results for scaling across only one of these two dimensions.  
  Prior work on wideband characterizations
  of non-coherent capacity e.g., \cite{journals/tit/MedardG02,
    Medard2005} with a fixed number of antennas assume a low-SNR regime where
   the available transmit power is
  spread across a very wide bandwidth.  On the other hand, 
  characterizations of non-coherent narrowband channel capacity with a large number of
  antennas (e.g., \cite{Chowdhury2014a, manolakos2014globecom}) cannot
  use low-SNR capacity results since the received SNR
  does not decrease with the number of antennas.  Thus, prior results in only one of these asymptotic
  regimes are not immediately applicable to the situation where both
  the bandwidth and the number of antennas scale jointly.

  The main contribution of this manuscript is the derivation of 
  capacity scaling for the block fading wideband massive SIMO channel
  using a non-coherent receiver with knowledge only of the channel statistics.
  To do this, we first derive a capacity upper bound.  Next, we
  describe an achievable scheme based on the use of energy modulation
  \cite{manolakos2014globecom,Chowdhury2014a,7565517,7404014}.  This
  matches the scaling behavior of the upper bound, thereby
  establishing the capacity scaling result.

  Our analysis reveals that, when the bandwidth is smaller than a
  threshold bandwidth which is proportional to the square root of the
  number of antennas, the capacity of the coherent wideband massive
  SIMO channel displays the same scaling as the capacity of the
  non-coherent wideband massive SIMO channel.  Further capacity gain
  in a scaling sense cannot be achieved by increasing the bandwidth
  above this threshold. 

  An important practical implication of our findings arises for
  pilot-assisted coherent receivers under our channel model, as compared to
  coherent receivers with perfect receiver CSI.  Since we consider
  block fading models with any finite blocklength, our non-coherent
  channel capacity scaling analysis is also applicable when a coherent detection
  technique relies on pilot-assisted channel estimation rather than perfect CSI.
  Thus, our analysis suggests that the coherent channel capacity is not necessarily an accurate
  indicator of rate scaling in practical systems
  when more spectrum and a larger number of antennas are available, as in the case of 5G.
  Furthermore, we find numerically that the trends predicted in our capacity
  scaling law characterization may be applicable outside the asymptotic regimes
  of large bandwidth and numbers of antennas. In particular,  our results indicate 
  that the derived capacity scaling laws are applicable for systems with tens to hundreds of antennas 
  and several tens of MHz of bandwidth which is consistent with the design of some existing wireless systems.

  The rest of this paper is structured as follows. Section
  \ref{sec:model} describes the general frequency-selective Rayleigh
  block fading channel model and the different system
  parameters. Section \ref{sec:theorem} describes our main result on capacity
  scaling. Section \ref{sec:general_structure} contains the
  general analysis strategy and outlines the proof of the
  converse. Section \ref{sec:EM} describes a scheme based on EM
  constellations that achieves the capacity scaling. Section
  \ref{sec:examples} provides numerical examples, discussion and extensions of the main
  results in the paper. Finally Section \ref{sec:conclusion} concludes
  the paper.

\subsection{Notation}

We use uppercase letters (A, B, C, \ldots) to refer to random variables, and lowercase letters (a, b, c, \ldots) to refer to their realization. We use script notation $\mathcal{A}, \mathcal{B}, \ldots$ to refer to sets. We use tensor notation (objects with indices) here to refer to collections
of objects. If the index of a tensor object is a set, that refers to the collection of all objects with the corresponding indices belonging to the set. For example $X_\mathcal{A}$ represents a one dimensional vector with coefficients $\{X_a, \forall a\in\mathcal{A}\}$, $x_{\mathcal{A},\mathcal{B}}$ denotes a two dimensional matrix, and so on.

We use $f(N) = \Theta(g(N))$ to denote that there exist constants $c_1,c_2,N_0>0$ such that for all $N>N_0$, $c_1<\frac{f(N)}{g(N)}<c_2$, and $f(N) = o(b(N))$ to denote that $\lim\limits_{N\rightarrow\infty}\frac{f(N)}{g(N)}=0$ \cite{Knuth1976}. We use $f(N) \lessapprox g(N)$ to denote $f(N) \leq g(N) + o(g(N))$. We use $f(N)\dot{=}g(N)$ to denote $\lim_{N\rightarrow\infty}f(N)-g(N)=0$.



\section{System model and assumptions}
\label{sec:model}

We consider a single-antenna transmitter and $N$ antennas at the
receiver.  We assume an independent identically distributed (i.i.d.)
block fading wideband channel model, such that the bandwidth is
divided into a number of independent frequency-flat subchannels.  Such
a channel may arise, for example, in a rich scattering environment
with a high delay spread.  We assume that the delay spread is constant

We assume further that the subchannel gains experience Rayleigh fading that is
i.i.d. across subchannels and receiver antennas.  The validity of this
assumption depends on wireless propagation in the bands of
operation and on the number of receiver antennas.  In sub-6 GHz bands
with a rich scattering environment, this assumption is typically more accurate than in mmWave bands (e.g., 28
GHz, 38 GHz, 60 GHz) where the propagation is more specular. And less
scattering leads to more correlation across fading realizations. Our
analysis does not apply in channels when the i.i.d. assumption is not valid. Some existing work on wideband sparse channels have reported an ``overspreading''
phenomenon similar to ours. We discuss this related work and possible extensions in Section \ref{subsec:general_channel_models}.

  We take the subchannel bandwidth to be the coherence bandwidth $B_c$ which
  is assumed fixed in this manuscript.  We also assume that the
  coherence time is fixed, equal to $L_c$ symbols and that each subchannel experiences
  narrowband block fading \cite{goldsmith2005book}.  The number of subchannels  is $B,$ thus the total bandwidth is $BB_s.$

The channel can be represented on the resource grid depicted in Figure \ref{fig:channelOFDMAb} where we denote the transmitted symbol indices by $\ell \in \{0, \ldots, L_c - 1\} \triangleq \setL,$ the receive antennas by $r \in \{0, \ldots, N-1\} \triangleq \setN,$ and the subchannels by $m \in \{0, \ldots, B-1\}  \triangleq \setB$. For each triplet $(r,m,\ell)\in\setN\times\setB\times\setL$ the channel output $Y_{r, m,\ell}$ is given by

\begin{equation}
\label{eq:channelmodel}
  \begin{split}
    Y_{r, m,\ell}&=H_{r, m} X_{m,\ell}+Z_{r, m,\ell}.\\
  \end{split}
\end{equation}
where $Z_{r,m,\ell}\sim\mathcal{CN}(0,1)$ is Additive White Gaussian Noise (AWGN). $H_{r,m}\sim\mathcal{CN}(0,1)$ is the channel gain, which is i.i.d. for each $r$ and $m$ but remains constant over all $\ell \in \setL,$ and $X_{m,\ell}$ is the channel input, where the set of transmitted symbols $X_{\setB,\setL}$ are subject to the power constraint

\begin{equation}
\label{eq:pconstraint}\Ex{}{\frac{1}{BL_c}\sum_{\ell \in \setL}\sum_{m \in \setB}|X_{m,\ell}|^2} \leq  P,
\end{equation}
where $P$ is the average power available to the single-antenna transmitter. This is the block Rayleigh fading model with coherence block length $L_c$. In this paper we assume $L_c$ and $P$ are constants; furthermore $L_c=\Theta(1)$ and $P = \Theta(1)$ will be used interchangeably since we study the scaling behavior.

\begin{figure}[!t]
  \centering
  \includegraphics[width=0.7\columnwidth]{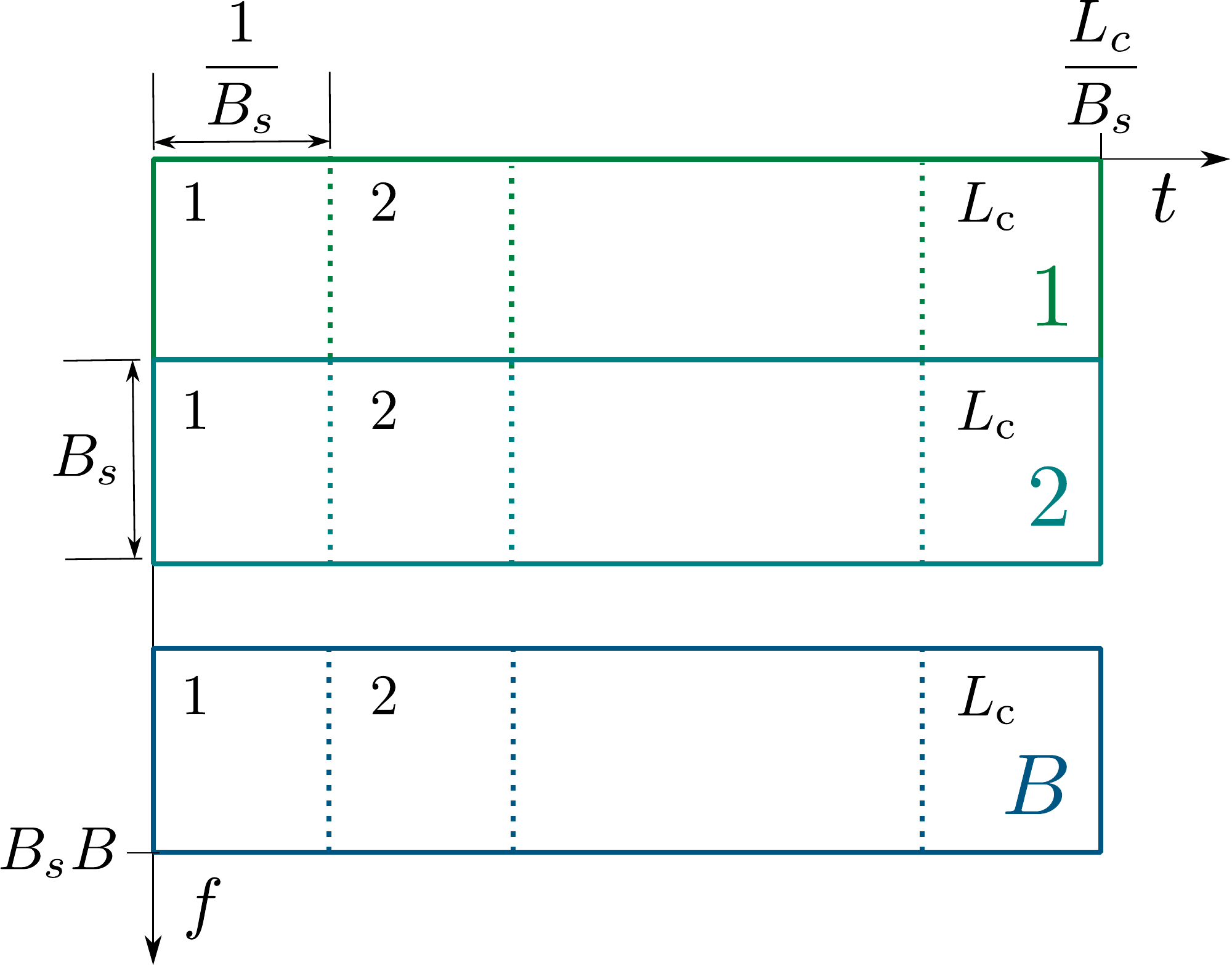}
  \caption{Block-fading frequency-selective channel consisting in
$B$ independent subchannels of bandwidth $B_s$ each, where each subchannel experiences flat i.i.d. Rayleigh block-fading with block length $L_c$.}
  \label{fig:channelOFDMAb}
\end{figure}

In our analysis we compare the ergodic capacity
\cite{goldsmith2005book} of coherent and non-coherent channels as
defined below. Ergodic capacity is a more appropriate performance
indicator than capacity with outage in modern wireless technologies
with turbo coding and hybrid automated repeat request
\cite{Lozano2012}.  Furthermore, we assume no transmitter CSI for both
the coherent and the non-coherent systems.

\begin{definition}
 We define the \textbf{coherent} ergodic capacity of the block fading Rayleigh channel as a function of $N$ and $B$ as

 \begin{equation}
 \label{eq:defCc}
 C_c(N,B)=\sup_{p(X_{\setB,\setL})}\Ex{H_{\setN,\setB}}{\CInf{X_{\setB,\setL}}{Y_{\setN,\setB,\setL}}{H_{\setN,\setB}}}
 \end{equation}
\end{definition}

\begin{definition}
 We define the \textbf{non-coherent} ergodic capacity of the block fading Rayleigh channel as a function of $N$ and $B$ as

 \begin{equation}
 \label{eq:defCn}
 C_n(N,B)=\sup_{p(X_{\setB,\setL})}\Ex{H_{\setN,\setB}}{\Inf{X_{\setB,\setL}}{Y_{\setN,\setB,\setL}}}
 \end{equation}
\end{definition}

Here, note that $C_c(N,B)$ and $C_n(N,B)$ are specified in units of bits per fading block. The coherent capacity can be put in bits per channel use as $\frac{B_s}{L_c}C_c(N,B)$ where $\frac{B_s}{L_c}$ is a constant that does not affect scaling results, and likewise for the non-coherent capacity. Note also that the supremum is outside the expectation operator  and $p(X_{\setB,\setL})$ does not depend on the channel, as there is no feedback nor CSI at the transmitter.

In our analysis, we will study the scaling of $C_c(N,B)$ and $C_n(N,B)$ asymptotically as $B, N$ both go to infinity according to the following relation:

\begin{equation}
  \epsilon = \lim_{B,\,N\rightarrow \infty}
  \frac{\log(B)}{\log(N)},
\label{eq:joint_scaling}
\end{equation}
for a non-negative constant $\epsilon.$  This may equivalently be expressed as $B=\Theta(N^\epsilon).$

\begin{figure*}[!b]
\normalsize
\vspace*{4pt}
\hrulefill
\setcounter{MYtempeqncnt}{\value{equation}}
\setcounter{equation}{9}
\begin{equation}
\label{eq:MIsplit}
 \begin{split}
    C_n(B,N)&=\max_{p(X_{\mathcal{B},\mathcal{L}}):\Ex{}{\frac{1}{BL_c}\sum_{\ell \in \setL}\sum_{m \in \setB}|X_{m,\ell}|^2} \leq  P} \Ex{H_{\setN,\setB}}{\Inf{X_{\mathcal{B},\mathcal{L}}}{Y_{\mathcal{N},\mathcal{B},\mathcal{L}}}}\\
    &=\max_{\rho_{\mathcal{B}}:\sum_{m \in \setB}\rho_m \leq  P} \sum_{m\in\mathcal{B}} \max_{p(X_{m,\mathcal{L}}):\Ex{}{\frac{1}{L_c}\sum_{\ell \in \setL}|X_{m,\ell}|^2} \leq  \rho_m} \Ex{H_{\setN,m}}{\Inf{X_{m,\mathcal{L}}}{Y_{\mathcal{N},m,\mathcal{L}}}}\\
 \end{split}
\end{equation}
\setcounter{equation}{\value{MYtempeqncnt}}
\end{figure*}

\section{Main Result}
\label{sec:theorem}

\subsection{Coherent capacity scaling}

We begin by formalizing the scaling of the capacity of the coherent channel $C_c(B,N)$ for large $N$. Since optimal inputs and receiver combining are well known for the SIMO channel with perfect CSI, the scaling result follows from simple observations on known expressions.

\begin{lemma}
\label{ref:lemCoh}
For $\epsilon < 1,$  the capacity of a coherent block Rayleigh fading i.i.d. wideband SIMO channel with $N$ receive antennas and bandwidth $B=\Theta(N^\epsilon)$, where $H_{\setN,\setB}$ is known perfectly at the receiver, and the transmitter has no CSI nor feedback, scales as

 \begin{equation}C_c(B,N)=\Theta(N^{\epsilon} \log(N)).\end{equation}
 For $\epsilon\geq1$ it scales as

 \begin{equation}C_c(B,N)=\Theta(N).\end{equation}
\end{lemma}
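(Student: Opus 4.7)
The plan is to reduce $C_c(B,N)$ to a scalar expression using the structure of a SIMO channel with perfect receiver CSI, and then examine its asymptotic behavior case by case in $\epsilon$. By the i.i.d. symmetry of the channel across the $B$ subchannels and the absence of transmitter CSI, an optimal input uses independent complex Gaussian symbols with uniform power allocation $\rho_m = P/B$ (equal channel statistics imply $\rho_m$ constant is optimal over $\sum_m \rho_m \leq P$). Given perfect receiver CSI, applying maximum-ratio combining on subchannel $m$ reduces it to a scalar AWGN channel with effective gain $G_m = \sum_{r \in \setN} |H_{r,m}|^2$, a chi-squared random variable with $2N$ degrees of freedom and mean $N$. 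The ergodic coherent capacity per fading block thus simplifies to
\begin{equation*}
  C_c(B,N) = B L_c \, \Ex{G}{\log\bigl(1 + (P/B)\, G\bigr)},
\end{equation*}
where $G$ has the common marginal of $G_m$.

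For the upper bound, Jensen's inequality applied to the concave logarithm gives $\Ex{G}{\log(1 + (P/B) G)} \leq \log(1 + PN/B)$. For the matching lower bound, I would invoke concentration of $G/N$ around $1$: for any fixed $\delta \in (0,1)$, $\Pr(G \geq (1-\delta)N) \to 1$ as $N \to \infty$, so $\Ex{G}{\log(1 + (P/B) G)} \geq (1-o(1)) \log\bigl(1 + (1-\delta) PN/B\bigr)$ (the contribution from the atypical set $\{G < (1-\delta)N\}$ can only add non-negative mass and may be dropped for the lower bound).

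Combining the two sides, the scaling is controlled by $B \log(1 + PN/B)$ and splits into three cases. In the regime $\epsilon < 1$, $B = \Theta(N^\epsilon)$ implies $PN/B \to \infty$, so $\log(1 + PN/B) = (1-\epsilon)\log N + O(1)$, yielding $C_c(B,N) = \Theta(N^\epsilon \log N)$. For $\epsilon = 1$, $PN/B = \Theta(1)$, so $\log(1 + PN/B) = \Theta(1)$ and $C_c(B,N) = \Theta(B) = \Theta(N)$. For $\epsilon > 1$, $PN/B \to 0$, and the Taylor expansion $\log(1+x) = x(1 + o(1))$ applied to both the Jensen upper bound and the concentration lower bound yields $\Ex{G}{\log(1 + (P/B) G)} = \Theta(N/B)$, hence $C_c(B,N) = \Theta(N)$.

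The main subtlety I anticipate is ensuring tightness in the low-SNR regime $\epsilon > 1$: because $PN/B \to 0$, the logarithm must be linearized carefully on both sides so that the multiplicative prefactor $B L_c$ produces a $\Theta(N)$ rather than a $\Theta(N/\log N)$ or $\Theta(N \log N)$ estimate. No novel machinery is required beyond first-order asymptotics of $\log(1+x)$ as $x \to \infty$ and as $x \to 0$, since the coherent SIMO capacity factorizes cleanly across subchannels under uniform power allocation, reducing the problem to asymptotics of a single scalar expectation.
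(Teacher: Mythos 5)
Your proposal is correct and follows essentially the same route as the paper's proof: reduce to per-subchannel Gaussian signaling with equal power and maximal-ratio combining, then use concentration of the chi-squared gain $\sum_{r}|H_{r,m}|^2$ around $N$ to evaluate $B\log(1+PN/B)$ in the three regimes of $\epsilon$. Your explicit two-sided bounding (Jensen for the upper bound, a concentration event for the lower bound) is slightly more careful than the paper's direct appeal to sharp concentration, but it is the same argument.
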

\begin{proof}
  The SIMO channel capacity with perfect CSI at the receiver is well known. We consider a SIMO model where for each pair $(m,\ell)\in\setB\times\setL$ the subchannel $H_{\setN,m}$ is an independent vector. Since there is no feedback or CSI at the transmitter, the ergodic capacity is achieved using Gaussian inputs with equal power in all subchannels \cite{goldsmith2005book}. Since there is only one transmit antenna, maximal ratio combining can be used at the receiver and the capacity is
  \begin{equation}
    \label{eq:coherent_cap_erg}
    C_c(N,B)=\Ex{H_{\setN, m}}{B B_s\log(1+\frac{P (\sum_{r=0}^{N-1}|H_{r, m}|^2)}{B })} \,\, \text{bit/s}.
  \end{equation}
The expectation is taken over the randomness in the subchannel coefficients. $\sum_{r=0}^{N-1}|H_{r, m}|^2$ is chi-square distributed with $2N$ degrees of freedom and in the limit $N \rightarrow \infty$ the term $\frac{P (\sum_{r=0}^{N-1}|H_{r, m}|^2)}{B }\to\frac{NP}{B}$ concentrates sharply around a quantity proportional to $N^{1-\epsilon}, $ since $P$ is constant.  Thus, with $\epsilon< 1$, the ergodic capacity scales as $\Theta(B\log(N)).$ For $\epsilon=1,$  the SNR per subchannel does not increase, hence the ergodic capacity scales as $\Theta(B) = \Theta(N)$.  Finally, for $\epsilon> 1$ the capacity expression is the same but the effective SNR per subchannel decreases as $N$ increases. Using the approximation $\log(1+x) \simeq x$ for $x\ll1$ the ergodic capacity is proportional to $\frac{N^{1+\epsilon}}{N^\epsilon}$ for sufficiently large $N$, which reduces to $\Theta(N)$.
\end{proof}

\subsection{Non-coherent capacity scaling}

A closed form expression for the capacity of the general multi-antenna non-coherent channel is not available. Instead, we characterize the scaling behavior of an upper bound to the capacity and show that a scheme with the same rate scaling exponent is achievable. The upper bound and achievable scheme coincide in a scaling law sense but the corresponding rates may differ by constant or lower order terms.

\begin{theorem}
\label{th:capacity}
The capacity scaling for the non-coherent block fading i.i.d. Rayleigh wideband SIMO channels with $N$ receive antennas and bandwidth $B=N^\epsilon,$ with a finite coherence time $L_c,$ satisfies

\begin{equation}\Theta(N^{\min(\epsilon, \frac{1}{2}) - \alpha})\leq C_n(B, N) \leq \Theta(N^{\min(\epsilon, \frac{1}{2}) + \alpha})\end{equation} for all $\alpha > 0$.
\end{theorem}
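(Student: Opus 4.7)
The plan is to prove both the converse and achievability by reducing them to a single ingredient: a sufficiently tight characterization of the single-subchannel non-coherent capacity $C_n^{(1)}(\rho,N)$ with power $\rho$, coherence length $L_c$, and $N$ receive antennas. The decomposition (9) already collapses the $B$-dimensional problem into the Lagrangian $\max_{\sum_m \rho_m\le P}\sum_{m\in\setB} C_n^{(1)}(\rho_m,N)$, so once per-subchannel bounds of matching form are in hand, both directions of the theorem follow from elementary concave optimization plus a case split on $\epsilon$.

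For the converse I would first prove a per-subchannel upper bound of the form
\begin{equation*}
 C_n^{(1)}(\rho,N)\;\lessapprox\; c_{L_c}\,N^{\alpha}\log\!\bigl(1+\rho\sqrt{N}\bigr)
\end{equation*}
valid for any $\alpha>0$ and any $\rho\in(0,P]$. Substituting into (9) leaves a concave maximization of $\sum_m\log(1+\rho_m\sqrt N)$ under $\sum_m\rho_m\le P$. A routine Lagrangian analysis splits the answer into two regimes. If $B\le\Theta(\sqrt N)$ the maximizer is equal allocation $\rho_m=P/B$, and the bound scales as $B\log\!\bigl(1+P\sqrt N/B\bigr)=\tilde\Theta(N^{\epsilon})$. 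If $B>\Theta(\sqrt N)$ the optimizer concentrates the budget on only $\Theta(\sqrt N)$ subchannels, giving $\tilde\Theta(\sqrt N)$. In both regimes the $N^\alpha$ slack absorbs all polylogarithmic factors into the $N^{\min(\epsilon,1/2)+\alpha}$ bound claimed in the theorem.

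For achievability I would use the energy-modulation construction of \cite{manolakos2014globecom,Chowdhury2014a}. Transmit zero on all but $k=\min(B,\lceil P\sqrt N\rceil)$ subchannels, and on each active subchannel choose, i.i.d.\ across $\ell\in\setL$, a signal whose squared magnitude $|X_{m,\ell}|^2$ is drawn from a grid $\{0,\delta,2\delta,\dots,(M-1)\delta\}$ with step $\delta\asymp 1/\sqrt N$ and number of levels $M\asymp P\sqrt N/k$ chosen so that the average energy matches $P/k$. The receiver computes the sufficient statistic $S_{m,\ell}=N^{-1}\sum_{r}|Y_{r,m,\ell}|^2$, which by the CLT applied jointly over $H_{\setN,m}$ and $Z_{\setN,m,\ell}$ concentrates around $|X_{m,\ell}|^2+1$ with fluctuation $O(N^{-1/2})$. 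A standard argument then shows all $M$ grid points are jointly distinguishable with vanishing error, delivering $\Theta(\log M)=\Theta(\log N)$ bits per symbol per active subchannel. Summed across the $kL_c$ active slots per block, the overall rate is $\tilde\Theta(k\log N)=\tilde\Theta(N^{\min(\epsilon,1/2)})$, which matches the converse up to the $N^\alpha$ slack.

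The main obstacle will be establishing the per-subchannel converse with the correct $\sqrt N$ factor. The naive coherent genie bound $C_n^{(1)}(\rho,N)\le L_c\log(1+\rho N)$ is too loose by a factor of $\sqrt N$ and would fail to produce the $\min(\epsilon,1/2)$ cutoff; tightening it requires an argument that explicitly charges the receiver for \emph{not} knowing $H$. I would aim for this via either a duality-style upper bound in the spirit of Lapidoth--Moser, a controlled estimate of the Marzetta--Hochwald non-coherent-capacity expression in the joint large-$N$ regime, or a reduction that replaces the received signal by the sufficient statistic $S_{m,\ell}$ and bounds its mutual information with $|X_{m,\ell}|^2$ through a Gaussian surrogate of effective noise variance $\Theta(1/N)$. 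All remaining steps---the power-allocation optimization for the converse and the CLT-based error analysis for the achievability---are essentially mechanical once this single-subchannel lemma is in place.
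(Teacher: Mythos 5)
Your architecture matches the paper's: the parallel-subchannel decomposition of \eqref{eq:MIsplit}, the reduction to a single per-subchannel capacity function, a concavity/equal-power argument to handle the allocation, and EM achievability over $\min(B,\Theta(\sqrt N))$ active subchannels. The genuine gap is that the entire technical content of the converse is concentrated in the per-subchannel bound $C_n^{(1)}(\rho,N)\lessapprox c_{L_c}N^{\alpha}\log(1+\rho\sqrt N)$, which you state as an ingredient and then defer to ``either a duality-style bound, a controlled estimate of Marzetta--Hochwald, or a Gaussian surrogate for the sufficient statistic.'' That lemma is precisely where the paper does all the work (Lemmas \ref{lemm:opt_mi} and \ref{lemm:opt_miLc}, Appendices \ref{app:lemm:separation} and \ref{app:lemm:separation_block}): it invokes the Marzetta--Hochwald decomposition $X_{m,\setL}=\sqrt{A_m}U_{m,\setL}$ to reduce to an energy distribution, partitions the normalized-energy axis into bins $\setS_k$ of width $N^{\alpha-\xi}$, and shows via the CLT for the quadratic statistic that the conditional output laws for two energies in the same bin are asymptotically indistinguishable, yielding $f_{\mathrm{C}}(P/M)\leq\Theta(N^{-2\alpha})$; the $L_c>1$ case additionally requires controlling the information carried by the isotropic direction $U_{m,\setL}$. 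None of your three candidate routes is carried out, and the naive genie bound you correctly reject is the only one that is elementary, so as written the proposal is a plan rather than a proof.

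Two smaller points. First, your grid spacing $\delta\asymp 1/\sqrt N$ sits exactly at the concentration scale of $S_{m,\ell}$: the resulting error exponent is $N\delta^2=\Theta(1)$, so the per-symbol error probability does not vanish and the union bound over $\Theta(\sqrt N)$ active subchannels diverges. The paper takes spacing $d=\Theta(N^{-t})$ with $t<\tfrac12$ strictly, which is what makes the union bound $N^{t}e^{-N^{1-2t}}$ in \eqref{eq:ub_prob_error} go to zero; you need the same strict inequality. Second, in your converse optimization the maximizer of $\sum_m\log(1+\rho_m\sqrt N)$ under $\sum_m\rho_m\le P$ is still equal allocation over all $B$ subchannels by concavity; it does not ``concentrate on $\Theta(\sqrt N)$ subchannels.'' The value simply saturates at $\Theta(\sqrt N)$ because $\log(1+x)\approx x$ when $P\sqrt N/B\to 0$. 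The scaling conclusion is unaffected, but the stated mechanism is wrong.
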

\begin{proof}
  The proof is trivial for $\epsilon < \frac{1}{2},$ for
  which we use the coherent capacity bound, i.e.,\[C_n(B, N)\leq C_c(B,
  N).\] Therefore with $\epsilon<\frac{1}{2}$ the upper bound is
  $C_n(B, N) \leq \Theta(N^{\epsilon})$. In addition,  the
  EM scheme discussed in Section \ref{sec:EM} achieves a rate
  $\Theta(N^\epsilon)$ with a vanishing error probability if and only
  if $\epsilon < \frac{1}{2}$. In addition, for the case
  $\epsilon>\frac{1}{2}$ it is trivial to construct an achievable
  scheme by using a subset of the channels, i.e., selecting
  $B'=\sqrt{N}<B$ subchannels to transmit using the same EM scheme
  described in Section \ref{sec:EM}. Hence, the remainder of the proof
  consists in showing that when $\epsilon > \frac{1}{2}$ a rate
  scaling $\Theta(N^\epsilon)$ is not achievable when the channel
  realization is not known at the receiver.  In a prior work
  \cite{mainak2015wideband}, we derived a bound on the same exponent
  for the achievable rate with inputs subject to a PAPR (peak to
  average power ratio) constraint with the fading blocklength $L_c=1$.
  In the current  work, we do not make any assumptions on the PAPR or
  $L_c, $ except for the  latter staying constant.

We establish a  capacity
upper bound for $\epsilon > \frac{1}{2}$ in Section \ref{sec:general_structure}.  This reveals that any capacity optimal scheme cannot use bandwidth more than $\Theta(N^{\frac{1}{2}}).$   We then show in  Section \ref{sec:EM}, an achievability result using energy modulation which matches the capacity scaling in the sense of the theorem above.  This establishes the theorem.

\end{proof}

Comparing Lemma \ref{ref:lemCoh} and Theorem \ref{th:capacity}, we
observe that the capacities of coherent and non-coherent channels
scale the same way when the bandwidth scales slower than a certain
threshold ($\Theta(N^{1/2})$), whereas beyond that threshold,
additional bandwidth does not help increase the non-coherent capacity scaling.



Furthermore, we have the following corollary:
\begin{corollary}
  The capacity scaling of coherent and non-coherent receivers is the same in a wideband massive SIMO system if the bandwidth is lower than a threshold bandwidth proportional to the square root of the number of antennas.
\end{corollary}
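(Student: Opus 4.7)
The plan is to derive this corollary as an immediate consequence of combining Lemma \ref{ref:lemCoh} and Theorem \ref{th:capacity}, with the threshold condition translated into the parameter $\epsilon$. The phrase ``bandwidth lower than a threshold proportional to $\sqrt{N}$'' corresponds precisely to $B = O(N^{1/2})$, i.e., $\epsilon \leq \tfrac{1}{2}$ in the joint scaling \eqref{eq:joint_scaling}. In this regime, $\min(\epsilon, \tfrac{1}{2}) = \epsilon$, so Theorem \ref{th:capacity} collapses to the two-sided bound $\Theta(N^{\epsilon - \alpha}) \leq C_n(B,N) \leq \Theta(N^{\epsilon + \alpha})$ for every $\alpha > 0$, while Lemma \ref{ref:lemCoh} gives $C_c(B,N) = \Theta(N^{\epsilon} \log N)$.

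The key observation I would invoke next is that $\log N$ is sub-polynomial, i.e., $\log N = o(N^{\alpha})$ for every $\alpha > 0$. Consequently, for any fixed $\alpha > 0$ there exists $N_0$ such that for all $N > N_0$ one has $N^{\epsilon - \alpha} \leq N^{\epsilon} \log N \leq N^{\epsilon + \alpha}$, so the coherent capacity $C_c(B,N)$ lies within the same $(\epsilon - \alpha, \epsilon + \alpha)$ polynomial envelope as $C_n(B,N)$. Therefore both capacities share the same polynomial scaling exponent $\epsilon$ under the $\epsilon \leq \tfrac{1}{2}$ threshold.

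I would conclude by observing that the converse direction---i.e., that the two scalings differ strictly once the bandwidth exceeds the threshold---also falls out for free: for $\epsilon > \tfrac{1}{2}$, Lemma \ref{ref:lemCoh} yields exponent $\epsilon$ (or $1$ if $\epsilon \geq 1$) in $C_c$, while Theorem \ref{th:capacity} caps the non-coherent exponent at $\tfrac{1}{2}$, so the threshold $B = \Theta(N^{1/2})$ is tight.

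There is essentially no technical obstacle here; the corollary is a bookkeeping statement. The only thing requiring care is to be explicit that ``same scaling'' is interpreted in the $\Theta(N^{\,\cdot\,\pm\alpha})$ sense used throughout Theorem \ref{th:capacity}, which absorbs the $\log N$ factor present in the exact coherent expression. No new inequality, no new channel argument, and no reference to the upper bound construction of Section \ref{sec:general_structure} or the EM scheme of Section \ref{sec:EM} is needed beyond what is already encapsulated in the two preceding results.
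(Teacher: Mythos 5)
Your proposal is correct and follows essentially the same route as the paper, which simply states that the corollary ``follows directly from Lemma \ref{ref:lemCoh} and Theorem \ref{th:capacity}''; your version merely spells out the bookkeeping (the identification of the threshold with $\epsilon \leq \tfrac{1}{2}$ and the absorption of the $\log N$ factor into the $N^{\pm\alpha}$ envelope) that the paper leaves implicit.
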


\begin{proof}
  This follows directly from Lemma \ref{ref:lemCoh} and Theorem \ref{th:capacity}.
\end{proof}

\begin{remark}
  There is a rate scaling gap between genie-aided coherent receivers and pilot-assisted coherent receivers when the bandwidth scales faster than $\Theta(N^{1/2})$ in a wideband block Rayleigh fading SIMO channel.
\end{remark}

This remark follows from that fact that coherent receivers considered in this manuscript use genie-aided perfect
  noiseless CSI, which involves knowing $H_{\setN,\setB}$ perfectly.
  CSI obtained in any practical wireless system is not perfect, it is
  always acquired through the use of pilot symbols in a noisy channel.
  Thus pilot-assisted coherent receivers which perform
  channel estimation and symbol demodulation separately are also
  subject to the non-coherent capacity limit derived in this
  manuscript.  

These implications will be illustrated through simulation results in Section \ref{subsec:num_plots}.

\section{Proof of the converse}
\label{sec:general_structure}

In this section, we derive an upper bound on the capacity of the SIMO system per subchannel \eqref{eq:MIsplit}.  We then  exploit  the fact that the wideband SIMO channel consists of multiple
parallel subchannels which are statistically identical to derive a final upper bound on the capacity.

\subsection{Determining the number of active subchannels}
\label{sec:reductionbandwidthallocation}

Due to the fact that the channel coefficients $H_{r,m}$ in \eqref{eq:channelmodel} are independent and the transmitter has no CSI or feedback, the mutual information maximization problem in \eqref{eq:MIsplit}  can be expressed as the separate maximization problems over a collection of parallel subchannels coupled together only by the average input power constraint.

\stepcounter{equation}

We observe that when $\rho_m $ is fixed, the maximization problems over the subchannels are decoupled.  We can define the maximum mutual information achievable on subchannel $m$ as a function of the subchannel transmit power $\rho_m$ as follows:

\begin{equation}
\label{eq:MIfun}
f_{\mathrm{C},m}(\rho_m)\triangleq\max_{\substack{p(X_{m,\mathcal{L}}):\\\Ex{}{\frac{\sum_{\ell \in \setL}|X_{m,\ell}|^2}{L_c}}} \leq  \rho_m}\Ex{H_{\setN,m}}{\Inf{X_{m,\mathcal{L}}}{Y_{\mathcal{N},m,\mathcal{L}}}}.
\end{equation}

Since the $H_{r,m}$ in \eqref{eq:channelmodel} are i.i.d., the function $f_{\mathrm{C},m}(.)$ is the same for all subchannels. We denote this by dropping the index $m$. This allows us to rewrite the capacity of the wideband channel as the following sum of identical subchannel capacity functions.

\begin{equation}C_n(B,N)=\max_{\rho_{\mathcal{B}}:\sum_{m \in \setB}\rho_m \leq  P}\sum_{m\in\mathcal{B}}f_{\mathrm{C}}(\rho_m)
\end{equation}

We observe that $f_{\mathrm{C}}(\rho_m)$ is a non-decreasing function of $\rho_m$.  By the symmetry of the problem and the KKT conditions, the optimal capacity achieving solution consists of selecting a certain number of subchannels and dividing the total transmit power equally across them.

This means that the non-coherent capacity of the wideband channel can be written as

\begin{equation}
\label{eq:MIwrap}
 \begin{split}
    C_n(B,N)&=\max_{M\in\{1\dots  B\}} M f_{\mathrm{C}}\left(\frac{P}{M}\right)\\
 \end{split}
\end{equation}
where, to fully characterize capacity, we would have to address two optimizations: first, find the optimal per-subchannel input distribution in \eqref{eq:MIfun} when $\rho_m=\frac{P}{M}$; and second, find the optimal number of subchannels that must be used in order to maximize \eqref{eq:MIwrap}, denoted by $M^*$.

\subsection{Separation of Energy and Shape Distributions}
\label{sec:inputenergyseparation}

The following lemma is Theorem 2 in \cite{Marzetta1999}, particularized from MIMO to SIMO and written using our notation. We will use this lemma to simplify the inner optimization contained in \eqref{eq:MIfun}.
\begin{lemma}
\label{lem:marzetta}
 \textbf{(Theorem 2 of [3])} The optimal distribution for a frequency-flat Rayleigh block fading $1\times N$ SIMO (sub)channel is $X_{m,\mathcal{L}}=\sqrt{A_m} U_{m,\mathcal{L}}$ where $A_m$ and $U_{m,\mathcal{L}}$ are independent, $A_m$ is a non-negative real number distribution and $U_{m,\mathcal{L}}$ is an Isotropically Distributed Unitary Vector (IDUV).
\end{lemma}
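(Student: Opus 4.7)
The plan is to prove Lemma \ref{lem:marzetta} by the classical symmetrization argument of Marzetta and Hochwald, exploiting unitary invariance of the $L_c$-dimensional Rayleigh block-fading channel along the time (coherence-block) axis. Fix a subchannel $m$ and view the input $X \triangleq X_{m,\setL}$ as a column vector in $\mathbb{C}^{L_c}$ and the output $Y \triangleq Y_{\setN,m,\setL}$ as an $N\times L_c$ matrix. Since $H_{r,m}\sim\mathcal{CN}(0,1)$ is i.i.d.\ across $r$ and independent of the unit-variance complex Gaussian noise, conditional on $X$ the rows of $Y$ are i.i.d.\ $\mathcal{CN}(0, I_{L_c} + X X^\dagger)$, so the channel law $p(Y|X)$ depends on $X$ only through $X X^\dagger$.

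The first key step is to establish unitary invariance of the mutual information. For any deterministic $\Phi$ in the unitary group $U(L_c)$, the substitution $X \mapsto \Phi X$ changes the conditional row covariance from $I + X X^\dagger$ to $\Phi(I + X X^\dagger)\Phi^\dagger$, which is exactly the covariance obtained by post-multiplying each row of $Y$ by $\Phi^T$. Hence the joint law of the input-output pair when $X$ is replaced by $\Phi X$ coincides with that of $(\Phi X, Y\Phi^T)$ when the original input is used; since both transformations are deterministic bijections, $\Inf{\Phi X}{Y\Phi^T}=\Inf{X}{Y}$, so rotating the input leaves the mutual information unchanged.

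The second step is to invoke concavity of $p_X \mapsto \Inf{X}{Y}$ for a fixed channel and average the input distribution over the Haar measure on $U(L_c)$. The averaged distribution $\bar p_X(\cdot)=\int_{U(L_c)} p_X(\Phi^\dagger \,\cdot\,)\,d\mu(\Phi)$ is $U(L_c)$-invariant, still satisfies the per-subchannel power constraint because $\|\Phi x\|=\|x\|$, and by Jensen's inequality achieves mutual information at least that of $p_X$. Consequently the supremum in \eqref{eq:MIfun} is attained among rotationally invariant input distributions. Writing $X = R\,U$ with $R=\|X\|\ge 0$ and $U=X/\|X\|$ on the unit sphere of $\mathbb{C}^{L_c}$ (chosen arbitrarily when $R=0$), uniqueness of the unitarily-invariant probability measure on the sphere forces $U$ given $R$ to be IDUV and independent of $R$; setting $A=R^2$ yields the decomposition $X=\sqrt{A}\,U$.

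The main obstacle will be the rigorous execution of the Haar-averaging step: one must verify measurability of $\Phi \mapsto p_X(\Phi^\dagger\,\cdot\,)$, apply Jensen in its infinite-dimensional form to the concave functional $p_X \mapsto \dEnt{Y}-\dCEnt{Y}{X}$ (concavity coming from $\dEnt{Y}$ being concave in $p_X$ while the conditional term is linear once $H$ is marginalized into $p(Y|X)$), and then read off isotropy from uniqueness of Haar measure on the complex sphere. Once invariance is in hand, the decomposition into a non-negative scalar $A$ and an independent IDUV $U$ is essentially kinematic.
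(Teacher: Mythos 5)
Your argument is correct in outline, but note that the paper does not actually prove this lemma: it is imported wholesale as Theorem 2 of Marzetta and Hochwald \cite{Marzetta1999}, particularized from MIMO to SIMO, and the manuscript's only contribution here is to restate it in its own notation and use it to reduce the inner optimization in \eqref{eq:MIfun} to an optimization over the scalar energy law $p(A_m)$. What you have written is therefore not an alternative route through this paper but a reconstruction of the original Marzetta--Hochwald proof, and the reconstruction is faithful: conditional on $X$ the rows of $Y$ are i.i.d.\ $\mathcal{CN}(0, I_{L_c}+XX^\dagger)$, so the channel law depends on $X$ only through $XX^\dagger$; the correspondence $(X,Y)\mapsto(\Phi X, Y\Phi^T)$ gives unitary invariance of $\Inf{X}{Y}$; concavity of the mutual information in the input law plus Haar averaging (which preserves the power constraint since $\|\Phi x\|=\|x\|$) shows the supremum may be restricted to rotationally invariant inputs; and any rotationally invariant input factors as $\sqrt{A}\,U$ with $U$ uniform on the complex unit sphere and independent of $A$. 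Two caveats are worth recording. First, the symmetrization argument shows that the supremum over all admissible inputs is attained \emph{within} the class of inputs of the stated form, not that a maximizer exists; this weaker reading is all the paper needs, since Lemma \ref{lem:marzetta} is invoked only to justify restricting the converse to distributions over $A_m$. Second, the technical points you defer (measurability of the Haar average, Jensen for the concave functional $p_X\mapsto \dEnt{Y}-\dCEnt{Y}{X}$) are genuine but routine and are dealt with in \cite{Marzetta1999}; they do not represent a conceptual gap. Given that the paper treats this as a black-box citation, an equally acceptable answer would have been a one-line reference to \cite{Marzetta1999}.
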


Using this result the input optimization problem in \eqref{eq:MIfun} reduces as follows: given $X_{m,\mathcal{L}}=\sqrt{A_m} U_{m,\mathcal{L}}$ with known optimal $p(U_{m,\mathcal{L}})$, we need only to optimize $\Inf{\sqrt{A_m}U_{m,\mathcal{L}}}{Y_{\mathcal{N},m,\mathcal{L}}}$ over $\displaystyle \{p(A_{m}):\Ex{}{A_{m}}=\rho_m\}$. Still, the amplitude distribution optimization over $p(A_{m})$ contained in $f_{\mathrm{C}}(\rho_m)$ to characterize capacity exactly remains elusive. Instead, in Sections \ref{sec:UBenergy} and  \ref{sec:UBenergy_greater_lc} we obtain scaling upper bounds on $f_{\mathrm{C}}(\frac{P}{M})$ as a function of $M$ and based on these results we identify a limit to the scaling of the optimal number of subchannels $M^*$ and obtain an upper bound on capacity scaling, i.e. on the scaling of $M^*f_{\mathrm{C}}(\frac{P}{M^*})$. In addition, we present an EM scheme in Section \ref{sec:EM} that achieves the capacity scaling upper bounds in all regimes and thus shows our result fully characterizes the capacity scaling.

This generalizes our prior result in \cite{mainak2015wideband} where we had employed a different proof that required two assumptions: that the fading blocklength was one ($L_c=1$) and that the input distribution $p(X_{m,\mathcal{L}})$ was subject to a peak power constraint. For ease of presentation, we introduce our extension by first removing one constraint, the peak power constraint, while keeping the other, $L_c=1$ in Section \ref{sec:UBenergy}, and next we generalize the proof to $L_c>1$ in Section \ref{sec:UBenergy_greater_lc}.

\subsection{Scaling upper bound for $C_n(B,N)$ with $L_c = 1$, $\epsilon>\frac{1}{2}$}
\label{sec:UBenergy}

Taking Lemma \ref{lem:marzetta} (i.e. Theorem 2 of \cite{Marzetta1999}), $L_c=1$  implies that  the isotropic vector $U_{m,\setL}$ becomes a random phase that cannot be recovered by the receiver, and the rate depends completely on the information carried by the distribution of the input energy, i.e., $p(A_m)$ with $A_m=|X_{m, 0}|^2.$ In this subsection, we assume that the transmit power is spread equally across $M$ ``active'' subchannels and investigate if the total capacity, i.e.,  $Mf_{\mathrm{C}}(P/M),$ can exceed $\Theta(N^{\frac{1}{2} + \xi})$ for any positive $\xi.$ We focus on the case when the bandwidth scales like $M=\Theta(N^{\frac{1}{2} +  \alpha})$ for $\alpha > 0,$ since when the bandwidth scales like $\Theta(N^{\frac{1}{2} - \alpha})$ for any positive $\alpha$, our achievable scheme achieves the scaling behavior of the coherent channel capacity scaling.  With $M=\Theta(N^{\frac{1}{2} +  \alpha})$ we have the following upper bound for the capacity function in each active subchannel.

\begin{lemma}
    \label{lemm:opt_mi}
    When $M=\Theta(N^{\frac{1}{2} +\alpha})$ with $\alpha>0$, $\rho_m=\frac{P}{M}$ and $L_c=1$, the subchannel capacity function \eqref{eq:MIfun} satisfies

  \begin{equation}f_{\mathrm{C}}(P/M) \leq  \Theta\left(\frac{1}{N^{2\alpha}}\right).
  \end{equation}
\end{lemma}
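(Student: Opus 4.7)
The plan is to reduce the optimization to a scalar Gamma channel, linearize it via a logarithmic change of variable, and then control the log-domain variance of the input through a splitting argument calibrated to the scaling $\rho_m = P/M = \Theta(N^{-1/2-\alpha})$.

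First, I would apply Lemma \ref{lem:marzetta} to strip the problem of all non-energy degrees of freedom. With $L_c = 1$, the isotropic unit ``vector'' $U_{m,\mathcal{L}}$ is a single phase $e^{i\phi_m}$ that is absorbed by the circularly symmetric fading $H_{r,m}$, so only the amplitude $A_m = |X_{m,0}|^2$ is identifiable at the receiver. Conditional on $A_m$, the vector $Y_{\mathcal{N},m}$ consists of $N$ i.i.d.\ $\mathcal{CN}(0, 1+A_m)$ entries, so $T_m \triangleq \sum_{r \in \mathcal{N}} |Y_{r,m}|^2$ is a sufficient statistic and $T_m \mid A_m \sim \mathrm{Gamma}(N, 1+A_m)$ (shape $N$, scale $1+A_m$). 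Thus $f_{\mathrm{C}}(\rho_m) = \max_{p(A_m): E[A_m] \leq \rho_m} I(A_m; T_m)$.

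Next, I would convert the Gamma channel to an approximately additive Gaussian channel in the log domain. Writing $T_m = (1+A_m)G_m$ with $G_m \sim \mathrm{Gamma}(N,1)$ independent of $A_m$, the bijection $S_m = \log(T_m/N)$ yields $S_m = B_m + W_m$ with $B_m = \log(1+A_m) \geq 0$ and $W_m = \log(G_m/N)$ satisfying $E[W_m] = \psi(N) - \log N \to 0$ and $\mathrm{Var}(W_m) = \psi'(N) = \Theta(1/N)$. By the central limit theorem applied to $\log G_m$, $W_m$ is asymptotically $\mathcal{N}(0, 1/N)$, so the Gaussian maximum-entropy inequality yields $I(A_m; T_m) = I(B_m; S_m) \leq \tfrac{1}{2}\log\!\bigl(1 + N\,\mathrm{Var}(B_m)\bigr)$. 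What remains is to bound $\mathrm{Var}(B_m)$ under the constraint $E[e^{B_m}-1] \leq \rho_m$.

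For the variance bound I would use a two-scale decomposition. Fix a threshold $K = K(N, \rho_m)$ and let $E = \mathbf{1}\{A_m > K\}$; chain rule gives $I(A_m; T_m) = I(E; T_m) + I(A_m; T_m \mid E)$. On the bulk event $\{A_m \leq K\}$, the inputs are bounded, so a direct Gaussian bound together with $E[A_m^2 \mid E = 0] \leq K\rho_m / (1-P(E))$ produces a contribution of order $\tfrac{1}{2}\log(1 + NK\rho_m)$, which behaves like $NK\rho_m$ in the small-argument regime. On the tail event $\{A_m > K\}$, Markov's inequality gives $P(E = 1) \leq \rho_m/K$, so $I(E; T_m) \leq H(\rho_m/K)$ and the conditional mutual information $I(A_m; T_m \mid E = 1)$ is weighted by the same small probability; a dyadic refinement of the tail decomposition controls its magnitude. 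Choosing $K$ so that the two contributions are comparable leads to $\mathrm{Var}(B_m) \lesssim \rho_m^2$ in the relevant regime, and hence $f_{\mathrm{C}}(\rho_m) \leq \Theta(N\rho_m^2) = \Theta(N^{-2\alpha})$.

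The main obstacle is the last step: controlling $\mathrm{Var}(B_m)$ under only the first-moment constraint on $A_m$. The pointwise bound $B_m \leq A_m$ yields only $\mathrm{Var}(B_m) \leq E[A_m^2]$, which is unbounded when $p(A_m)$ is peaky (a large atom at some $c \gg \rho_m$ with probability $\rho_m/c$). Unlike the peak-constrained prior version in \cite{mainak2015wideband}, where $A_m$ is uniformly bounded and the Gaussian analysis applies directly to $B_m$, here one must show that the entropy of the rare high-amplitude atoms contributes no more than $\Theta(N\rho_m^2)$ when combined with the Gaussian-channel contribution of the bulk. The splitting argument above accomplishes this by exploiting the specific scaling $\rho_m = \Theta(N^{-1/2-\alpha})$ with $\alpha > 0$, which forces $N\rho_m \to 0$ or at most polynomially moderate growth, and so allows the quadratic term $N\rho_m^2$ to dominate both sub-contributions after optimization over $K$.
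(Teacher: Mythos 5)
Your reduction to the scalar Gamma channel via the sufficient statistic $T_m=\sum_{r}|Y_{r,m}|^2$ and the log-domain linearization is a clean (and genuinely different) framing of the same phenomenon the paper exploits -- the output energy has resolution $\Theta(N^{-1/2})$ while the input energy budget is $\Theta(N^{-1/2-\alpha})$. The paper does not pass through a variance functional at all: it partitions the (normalized) amplitude axis into bins $\setS_k$, lower-bounds $p(Y_{\setN,0,0})$ by the within-bin mixture, and shows via a CLT expansion of the likelihood ratio $p(Y|\overline{A}_0=s)/p(Y|\overline{A}_0=a)$ that the conditional output laws for two amplitudes in the same bin are asymptotically indistinguishable, with the first-moment constraint controlling where the mass sits. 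Your route is more elementary up to the point where it must control the input in the log domain -- and that is exactly where it breaks.

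The gap is the final step, and it is not repairable as sketched. First, the intermediate target $\mathrm{Var}(B_m)\lesssim\rho_m^2$ is simply false under the first-moment constraint: the admissible on--off input $A_m\in\{0,c\}$ with $P(A_m=c)=\rho_m/c$ and $c=\Theta(1)$ gives $\mathrm{Var}(\log(1+A_m))=\Theta(\rho_m)$, so $\tfrac12\log(1+N\,\mathrm{Var}(B_m))=\Theta(\log N)\gg N^{-2\alpha}$; no argument can prove a false inequality, so the variance detour must be abandoned, not refined. Second, the two-scale split cannot close the loop because its two requirements on the threshold $K$ are mutually exclusive: forcing the bulk term $\tfrac12\log(1+NK\rho_m)$ down to $\Theta(N\rho_m^2)$ requires $K=O(\rho_m)$, but then Markov gives only $P(A_m>K)\le\rho_m/K=\Omega(1)$, so the tail event is not rare and $I(E;T_m)\le H(E)$ is an $\Theta(1)$ bound; conversely, making the tail rare forces $K\gg\rho_m$ and the bulk term blows past the target. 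A dyadic refinement of the tail does not help, because the obstruction is not the granularity of the shells but the fact that the information carried by rare $\Theta(1)$-amplitude symbols is genuinely of order $\rho_m\log(1/\rho_m)=\Theta(N^{-1/2-\alpha}\log N)$ (on--off keying at amplitude one is decodable with $N$ antennas), and any accounting that charges the tail its indicator entropy will produce at least this much. The statement you are asked to prove requires showing that such peaky inputs, while not captured by a variance bound, still yield per-subchannel mutual information consistent with the claimed scaling -- this is the content of the paper's bin-wise likelihood-ratio argument, and it is absent from your proposal. (A secondary issue: the inequality $I(B_m;S_m)\le\tfrac12\log(1+N\,\mathrm{Var}(B_m))$ needs $h(W_m)\ge\tfrac12\log(2\pi e/N)$, which is the \emph{opposite} direction from maximum entropy; the non-Gaussianity deficit of the log-Gamma noise is $\Theta(1/N)$ and must be shown not to swamp the $N^{-2\alpha}$ target.)
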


\begin{proof}
  Appendix \ref{app:lemm:separation} provides the proof for this. The proof relies on the observation that when two values of input energies $a_m,a_m'\sim A_m$ are close to each other, the output distributions $p(Y_{\setN,m,\setL}|a_m)$ and $p(Y_{\setN,m,\setL}|a_m')$ are indistinguishable in a sense made precise in Appendix \ref{app:lemm:separation}. Moreover, the input energies cannot be too far away from zero without violating the average energy constraint.
\end{proof}

\begin{remark}
\label{rem:opt:mi}
Lemma \ref{lemm:opt_mi} implies that when there are $M=\Theta(N^{\frac{1}{2} + \alpha})$ ``active'' subchannels, the total mutual information in the wideband channel is given by

\begin{equation}Mf_{\mathrm{C}}(P/M) \leq \Theta(N^{\frac{1}{2} - \alpha}).
\end{equation}
Therefore for any $\alpha>0$ we have $Mf_{\mathrm{C}}(P/M)=\Theta(N^{\frac{1}{2} - \alpha})\leq \sqrt{N}f(P/\sqrt{N})=\Theta(N^{\frac{1}{2}})$. Therefore the optimal allocation $M^*$ that maximizes \eqref{eq:MIwrap} satisfies $M^*\leq\Theta(N^{\frac{1}{2}})$. Since $f_{\mathrm{C}}(P/M^*)$ decreases with $M^*$, $M^*f_{\mathrm{C}}(P/M^*)\leq \Theta(M^*)\leq\Theta(N^{\frac{1}{2}})$, and thus the converse of Theorem \ref{th:capacity} with $\epsilon>\frac{1}{2}$ is proven for $L_c=1$
\end{remark}

\subsection{Upper bound on non-coherent capacity scaling, $L_c > 1$}
\label{sec:UBenergy_greater_lc}

For the case $L_c>1$, the Marzetta-Hochwald optimal input result in Lemma \ref{lem:marzetta} \cite{Marzetta1999} gives an input with the isotropic vector $U_{m,\setL}$ with more than one component. In the case $L_c=1$ no information about $U_{m,\setL}$ could be recovered from the channel output, but for $L_c>1$ the channel output may give information about $U_{m,\setL}$ in the sense that some phase-differences and amplitude-differences between $U_{m,\ell}$ and $U_{m,\ell+1}$ might be recoverable. Therefore, to construct the extension of  Lemma \ref{lemm:opt_mi}, the main difference from the proof when $L_c = 1$ is we must show that, when the sequence length $L_c$ is a constant that does not scale with $N$, instead of ``all the rate'' , now ``most of the rate'' depends on the information carried by the distribution of energy of the input.  The input energy density is now defined as $p(A_m)$ where the energy satisfies $A_m=\sum_{\ell = 0}^{L_c -1}|X_{m, \ell}|^2$. The exact sense in which ``most of the rate'' is interpreted is made clear in the proof.

We state the new lemma for the mutual information per subchannel as follows
\begin{lemma}
    \label{lemm:opt_miLc}
    When $M=\Theta(N^{\frac{1}{2} +\alpha})$, with $\alpha>0$, $\rho_m=\frac{P}{M}$ and $L_c\geq 1$ remains constant as $N$ grows, the subchannel capacity function \eqref{eq:MIfun} satisfies

\begin{equation}f_{\mathrm{C}}(P/M) \leq  \Theta\left(\frac{1}{N^{2\alpha}}\right).
\end{equation}
\end{lemma}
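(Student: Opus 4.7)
The plan is to extend the argument of Lemma \ref{lemm:opt_mi} from $L_c=1$ to arbitrary constant $L_c\geq 1$ by isolating the contribution of the directional degrees of freedom introduced when $L_c>1$, and showing that they contribute no more than the energy part to the mutual information scaling.

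By Lemma \ref{lem:marzetta} I may restrict attention to inputs of the form $X_{m,\setL}=\sqrt{A_m}\,U_{m,\setL}$, where $A_m\geq 0$ is a scalar amplitude with $\Ex{}{A_m}\leq L_c\rho_m=L_c P/M$ and $U_{m,\setL}$ is an IDUV on the complex unit sphere in $\mathbb{C}^{L_c}$, independent of $A_m$. The mutual information chain rule gives
\begin{equation*}
  \Inf{X_{m,\setL}}{Y_{\setN,m,\setL}}=\Inf{A_m}{Y_{\setN,m,\setL}}+\CInf{U_{m,\setL}}{Y_{\setN,m,\setL}}{A_m},
\end{equation*}
and I would bound the two terms separately.

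For the energy term I reuse the approach of Appendix \ref{app:lemm:separation}: marginalizing the output over $U_{m,\setL}$ and $H_{\setN,m}$, the conditional distribution $p(Y_{\setN,m,\setL}\mid A_m=a)$ depends on $a$ only through a low-dimensional scaling of its row covariance, and the effective SNR for resolving energies separated by $\Delta a\sim \rho_m$ is $N\rho_m^2=N^{-2\alpha}$. Since the blocklength $L_c$ enters only as a constant factor in the output dimension, the same divergence analysis as in the $L_c=1$ case yields $\Inf{A_m}{Y_{\setN,m,\setL}}\leq O(N^{-2\alpha})$.

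The hard part is bounding the directional term $\CInf{U_{m,\setL}}{Y_{\setN,m,\setL}}{A_m}$ by the same quantity, because the natural genie-aided bound $\CInf{U_{m,\setL}}{Y_{\setN,m,\setL}}{A_m,H_{\setN,m}}\leq L_c\log(1+N\rho_m/L_c)=\Theta(\log N)$ is far too loose to fit inside $O(N^{-2\alpha})$. The observation I will exploit is that in the non-coherent regime the direction is only resolved through quadratic (Gram-matrix) observables: conditional on $A_m=a$ the rows of $Y_{\setN,m,\setL}$ are $\mathcal{CN}(0,\,a\,uu^\dagger+I_{L_c})$, and a direct determinant/trace computation gives
\begin{equation*}
  \mathrm{KL}\bigl(p_{Y\mid U=u,A=a}\bigm\Vert p_{Y\mid U=u',A=a}\bigr)=\frac{Na^2}{1+a}\bigl(1-|u^\dagger u'|^2\bigr),
\end{equation*}
which is of order $N\rho_m^2=N^{-2\alpha}$ when $a\sim\rho_m$. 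Converting this KL bound into a mutual information bound via the convexity inequality $I(U;Y\mid A)\leq \Ex{U,U'}{\mathrm{KL}(p_{Y\mid U,A}\Vert p_{Y\mid U',A})}$ and then averaging over $A_m$ subject to $\Ex{}{A_m}\leq L_c\rho_m$ should yield the desired $O(N^{-2\alpha})$ bound, provided the $A_m$-integral is split into the regimes $Na^2\ll 1$ and $Na^2\gtrsim 1$ so that peaky distributions of $A_m$ are handled correctly; this case analysis, which parallels the peaky-input treatment of Appendix \ref{app:lemm:separation}, is where the main subtlety lies and will be the principal obstacle.

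Combining the two bounds gives $f_{\mathrm{C}}(P/M)\leq O(N^{-2\alpha})$, and the converse of Theorem \ref{th:capacity} for $\epsilon>\tfrac{1}{2}$ follows immediately as in Remark \ref{rem:opt:mi}.
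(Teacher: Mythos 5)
Your decomposition $\Inf{X_{m,\setL}}{Y_{\setN,m,\setL}}=\Inf{A_m}{Y_{\setN,m,\setL}}+\CInf{U_{m,\setL}}{Y_{\setN,m,\setL}}{A_m}$ is not the route the paper takes: Appendix \ref{app:lemm:separation_block} never isolates a separate directional term. It bounds the joint mutual information by comparing $p(Y_{\setN,0,\setL}\mid \overline{A}_0=a,U_{0,\setL}=u)$ against a mixture taken over a local amplitude bin $\setS_k$ \emph{and the entire isotropic prior on the direction}, and then disposes of the directional integral via Laplace's principle in \eqref{eq:intermediate_mod}, so the direction only enters through the maximum over $u$ of an exponent whose conditional mean is $-b^2N^{-2\alpha}+\tfrac{1}{2}abN^{-2\alpha}$. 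Your pairwise computation $\mathrm{KL}=\frac{Na^2}{1+a}\bigl(1-|u^\dagger u'|^2\bigr)$ is correct, and the convexity bound $\CInf{U}{Y}{A=a}\le \Ex{U,U'}{\mathrm{KL}}\le \frac{Na^2}{1+a}$ is fine pointwise in $a$; the failure is entirely in the averaging over $A_m$, exactly where you flag the difficulty.

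The gap is that the difficulty you defer is not surmountable by the tools you propose. Under the first-moment constraint $\Ex{}{A_m}\le L_c\rho_m=\Theta(N^{-1/2-\alpha})$, the quantity $\Ex{A_m}{\frac{NA_m^2}{1+A_m}}$ is not controlled by $N(\Ex{}{A_m})^2=\Theta(N^{-2\alpha})$; it can be as large as $N\Ex{}{A_m}=\Theta(N^{1/2-\alpha})$. Your proposed split into $Na^2\ll1$ and $Na^2\gtrsim1$ does not close this: in the second regime the pairwise KL is useless, and the generic substitute $\CInf{U}{Y}{A=a}\le O(\log(1+Na))$ combined with Markov's bound $\Pr(A_m\ge N^{-1/2})\le N^{1/2}\Ex{}{A_m}=O(N^{-\alpha})$ yields only $O(N^{-\alpha}\log N)$, which exceeds the target $N^{-2\alpha}$ for every $\alpha>0$. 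Moreover this is not mere looseness of the bound: for $L_c\ge2$ take the admissible two-point law $A_m\in\{0,N^{-1/2}\}$ with $\Pr(A_m=N^{-1/2})=L_cPN^{-\alpha}$. Conditioned on $A_m=N^{-1/2}$ the rows of $Y_{\setN,m,\setL}$ are i.i.d.\ $\mathcal{CN}(0,I+N^{-1/2}uu^\dagger)$, the sample covariance fluctuates at scale $N^{-1/2}$ per entry, so the direction is observed at constant SNR and $\CInf{U}{Y}{A_m=N^{-1/2}}=\Theta(1)$; hence your directional term is genuinely of order $N^{-\alpha}$ and no refinement of the case analysis can push it to $N^{-2\alpha}$. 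Your route therefore delivers at best $f_{\mathrm{C}}(P/M)\le O(N^{-\alpha})$ --- which still gives $Mf_{\mathrm{C}}(P/M)\le O(N^{1/2})$ and hence the converse of Theorem \ref{th:capacity}, but not the lemma as stated. Proving the stated $N^{-2\alpha}$ exponent requires treating amplitude and direction jointly as the paper does (full directional mixture inside the log-ratio denominator); you should also note that your decomposition exposes a real tension with the lemma's claimed exponent for $L_c\ge2$ that deserves to be raised rather than absorbed into a deferred case analysis.
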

\begin{proof}
Details are presented in Appendix \ref{app:lemm:separation_block}.
\end{proof}

\begin{remark}
Although the proof requires additional steps, the scaling bound in Lemma \ref{lemm:opt_miLc} takes the same value as in Lemma \ref{lemm:opt_mi}. Therefore Remark \ref{rem:opt:mi} applies also to the general case $L_c\geq1$.
\end{remark}

\section{Achievability using Energy Modulation}
\label{sec:EM}

We consider the wideband counterpart of the EM scheme initially described in \cite{Chowdhury2014a, manolakos2014globecom} for a narrowband non-coherent massive SIMO system with $M=1$ and $L_c=1$. The narrowband EM scheme modulates information only in the amplitude of the input using a non-negative-valued energy input constellation, transmitting $X\in\mathbb{R}^+\cup\{0\}$ such that $X^2\in \mathcal{C}$. Thus the receiver does not utilize the phase of the output and computes a quadratic total energy statistic across all receive antennas, $V=\frac{1}{N}\sum_{\setN} |Y_{r}|^2$. As $N\to\infty$ with constant power the received statistic satisfies $V\to X^2+1 \pm o(1/N)$ where the second term stems from the noise power $\frac{1}{N}\sum_{\setN}|Z_{r}|^2\to 1$. Thus, the receiver can decode by unequivocally mapping different received-energy regions over $V$ to different inputs $X$.

In this section we define the EM scheme for a wideband system with any arbitrary Rayleigh fading block length constant $L_c\geq1$. We consider the case of transmitting independent streams of information in $M$ subchannels, with the power equally spread across all subchannels. In relation to our upper bounds in Section \ref{sec:general_structure} note that $M$ is the number of active subchannels that may be selected as less than or equal to the number of available subchannels $B=\Theta(N^\epsilon)$. Our main challenge is to characterize the joint error probability across $M$ parallel EM transmissions when the power of each is $\Theta(1/M)$ instead of constant. We show that even in this case,  $\forall \alpha>0$ if $M\leq\Theta(N^{\frac{1}{2}-\alpha})$ the error probability still vanishes as $N\to\infty$ and EM can achieve an arbitrarily low probability of error when we select a constellation $\mathcal{C}$ of a specific type such that the rate is $M\log_2|\mathcal{C}|=\Theta(M\log(N))$.

To design the wideband EM achievable scheme we assume the transmitter wishes to transmit a total rate of $R$ bits per coherence block (can be converted to $R\frac{B_s}{L_c}$ bits per second where $B_s$ and $L_c$ are constants). In each band, the transmitter uses the same constellation $\mathcal{C}$ and only one point in the energy constellation is transmitted during the $L_c$ symbols of duration of the fading block. Hence, to transmit a total of $R$ bits the constellation size should satisfy $|\mathcal{C}| = 2^{R/M}, $ and to satisfy the power constraint the average power of the constellation is $\frac{P}{M}$. Without loss of generality let us normalize the transmitted power $P=1$.

We design the wideband EM scheme for arbitrary $L_c$ assuming an energy detector that removes the need to perform encoding across multiple symbols of the same block ($\ell$). The transmitter sends $X_{m,\ell}=\frac{\sqrt{A_m}}{\sqrt{L_c}}$ where $A_m \in \mathcal{C}$. In terms of the optimal capacity-achieving input distribution in Lemma \ref{lem:marzetta}, $X_{\ell,m}=\sqrt{A_{m}}U_{m,\setL}$, the EM scheme replaces the IDUV $U_{m,\setL}$ by the all-ones normalized vector $\frac{1_{\setL}}{\sqrt{L_c}}$. Since we showed in Lemma \ref{lemm:opt_miLc} that when $L_c$ is a constant ``most of the rate'' depends on the information carried by the distribution of energy, the EM scheme pays only a constant rate penalty  for not using the optimal $U_{m,\setL}$, and the achievable rate with this scheme achieves the scaling exponent of the capacity upper bound.

In the wideband EM scheme the receiver computes the average energy quadratic statistic in each subband $m$ as follows:

  \begin{equation}V_{m}=\frac{1}{NL_c}\sum_{r=0}^{N-1}\left|\sum_{\ell=0}^{L_c-1}Y_{r,m,\ell}\right|^2.
  \end{equation}
Based on its knowledge of the statistics of the channel and of
$\mathcal{C}$, the receiver divides the positive real line into
non-intersecting intervals or \emph{decoding regions}
$\{\mathcal{V}_k\}_{k=1}^{|\mathcal{C}|}$ and returns $ \hat {k} \in
\left\{ \tilde{k}: V_m \in \mathcal{V}_{\tilde{k}} \right\}.$

\begin{theorem}
  \label{thm:achieve}
  In a non-coherent block fading i.i.d. Rayleigh wideband SIMO channel with $N$ receive antennas, using equal power and rate allocation over $M=\Theta(N^{\epsilon})$ subchannels with a finite coherence time $L_c$, for any $\epsilon<\frac{1}{2}$, EM achieves a vanishing probability of error with increasing $N$ using a constellation with transmitted rate that scales as $R=\Theta\left(N^{\epsilon}\log(N)\right)$.
\end{theorem}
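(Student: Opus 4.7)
My plan is to prove achievability by directly bounding the per-subchannel decoding error probability of the energy detector and then applying a union bound across the $M$ parallel subchannels. The key structural observation is that the decoding statistic $V_m$ admits an exact Gamma distribution conditional on the transmitted energy. Since the EM transmitter sends the constant $X_{m,\ell}=\sqrt{A_m/L_c}$ across all $\ell\in\setL$, linearity of Gaussian sums gives $\sum_{\ell=0}^{L_c-1}Y_{r,m,\ell}=\sqrt{L_c A_m}\,H_{r,m}+\tilde Z_{r,m}$, with $\tilde Z_{r,m}\sim\mathcal{CN}(0,L_c)$ independent of $H_{r,m}\sim\mathcal{CN}(0,1)$. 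Therefore $\sum_\ell Y_{r,m,\ell}\mid A_m\sim\mathcal{CN}(0,L_c(A_m+1))$, and $NL_c\,V_m$ is an i.i.d. sum of $N$ exponentials of mean $L_c(A_m+1)$. A Chernoff/Gamma-tail bound then yields $\Pr(|V_m-(A_m+1)|>\delta(A_m+1))\leq 2\exp(-c_0 N\delta^2)$ for some absolute constant $c_0>0$ and all $\delta\in(0,\tfrac{1}{2})$.

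Next I would design the constellation explicitly using geometric spacing in the received-mean axis. Set $c_k+1=(1+\gamma/\sqrt{N})^k$ for $k=0,1,\ldots,|\mathcal{C}|-1$, with $\gamma=\gamma(N)$ a slowly growing parameter, and place the decoding threshold between $c_k$ and $c_{k+1}$ at the geometric midpoint $\sqrt{(c_k+1)(c_{k+1}+1)}$. By construction the distance from $c_k+1$ to either adjacent threshold is at least $\tfrac{1}{2}(c_k+1)\gamma/\sqrt{N}$ to leading order. Choosing $\delta=\gamma/(3\sqrt{N})$ in the concentration bound thus places $V_m$ inside the correct decoding region with probability at least $1-2\exp(-c_0\gamma^2/9)$, and a union bound over the $M=\Theta(N^{\epsilon})$ subchannels gives an overall error probability of at most $2N^{\epsilon}\exp(-c_0\gamma^2/9)$. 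This vanishes as long as $\gamma^2\gg\log N$, so taking $\gamma=\log N$ is sufficient.

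The final step is to verify that this construction simultaneously respects the per-subchannel power budget and delivers the claimed rate. Under a uniform prior on $\mathcal{C}$, a telescoping geometric-series calculation gives $\Ex{}{A_m}\approx\tfrac{1}{2}|\mathcal{C}|\gamma/\sqrt{N}$ to leading order, valid whenever $|\mathcal{C}|\gamma/\sqrt{N}\to 0$. Imposing $\Ex{}{A_m}=P/M=\Theta(N^{-\epsilon})$ pins down $|\mathcal{C}|=\Theta(N^{1/2-\epsilon}/\gamma)$, which with $\gamma=\log N$ yields
\[
\log_2|\mathcal{C}|=\bigl(\tfrac{1}{2}-\epsilon\bigr)\log_2 N-\log_2\log N+O(1)=\Theta(\log N),
\]
and the total achievable rate is therefore $R=M\log_2|\mathcal{C}|=\Theta(N^{\epsilon}\log N)$, matching the upper bound from Lemma \ref{lemm:opt_miLc} in a scaling sense. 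The consistency check $|\mathcal{C}|\gamma/\sqrt{N}=\Theta(N^{-\epsilon})\to 0$ is automatic for every $\epsilon>0$, and the same bound also certifies that the largest constellation energy $c_{|\mathcal{C}|-1}=O(N^{-\epsilon})$ vanishes.

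The principal technical obstacle I anticipate is calibrating the constellation resolution against the concentration window: the multiplicative spacing $(1+\gamma/\sqrt{N})$ must comfortably exceed the fluctuation of $V_m$ to drive the union-bound error to zero, yet must not be so large that $\log_2|\mathcal{C}|$ falls below $\Theta(\log N)$. This pins $\gamma$ into a narrow sub-polynomial band ($\sqrt{\log N}\ll\gamma\ll\sqrt{N}$), with $\gamma=\log N$ the natural choice. A secondary subtlety is that the concentration width is proportional to $A_m+1$; however, because our largest constellation energy satisfies $c_{|\mathcal{C}|-1}\to 0$, every $A_m+1$ lies in $[1,1+o(1)]$, and a single uniform bound suffices without having to track the dependence point-by-point.
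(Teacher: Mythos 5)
Your proof is correct, but it takes a genuinely different route from the paper's. The paper fixes a \emph{uniformly} spaced energy constellation $\mathcal{C}=\{0,2d,4d,\dots,2/M\}$ with half-distance $d=\Theta(N^{-t})$ for a free exponent $\epsilon<t<\tfrac{1}{2}$, and outsources the per-symbol error analysis to the error-exponent lemmas of the companion narrowband EM work \cite{7404014} (that thresholding $V_m$ is the ML detector, and that $\lim_{d\to 0}I_k(d)/d^2=\Theta(1)$), arriving at $P_{\text{error}}\lessapprox N^{t}e^{-N^{1-2t}}$. You instead exploit the exact conditional law of $V_m$ --- a normalized sum of $N$ i.i.d.\ exponentials of mean $A_m+1$, which your computation of $\sum_{\ell}Y_{r,m,\ell}\mid A_m\sim\mathcal{CN}(0,L_c(A_m+1))$ correctly establishes --- prove concentration directly by a Chernoff bound, and use a \emph{geometrically} spaced constellation matched to the multiplicative width of that concentration. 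What your approach buys is self-containedness (no appeal to external error-exponent lemmas) and a spacing that is the natural coordinate for the problem; what it gives up is error decay: your choice $\gamma=\log N$ yields $P_{\text{error}}=O(N^{\epsilon}e^{-c(\log N)^2})$, quasi-polynomial rather than the paper's stretched exponential $e^{-N^{1-2t}}$, though both vanish and both deliver the same rate scaling $R=M\log_2|\mathcal{C}|=\Theta(N^{\epsilon}\log N)$. In fact the two constructions describe the same one-parameter family (your $\gamma$ corresponds to the paper's $t$ via $\gamma=N^{1/2-t}$, and your admissible band $\sqrt{\log N}\ll\gamma\ll\sqrt{N}$ mirrors the paper's $\epsilon<t<\tfrac{1}{2}$); you simply select the slowest-growing admissible $\gamma$. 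One small caveat: your consistency check $|\mathcal{C}|\gamma/\sqrt{N}\to 0$ is asserted only for $\epsilon>0$, while the theorem nominally includes $\epsilon=0$; there the top constellation energy is $\Theta(1)$ rather than $o(1)$, so the linearization $c_k\approx k\gamma/\sqrt{N}$ and the claim $A_m+1\in[1,1+o(1)]$ need to be replaced by the full geometric-series evaluation and a concentration width proportional to $A_m+1=\Theta(1)$, but this only changes constants and the conclusion survives.
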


\begin{proof}
   We assume the transmitter chooses the constellation $\mathcal{C}$ as

  \begin{equation}
    \begin{array}{ll}
      \mathcal{C} = \left \{ 0, 2d, 4d,\cdots,\frac{2}{M} \right \},
    \end{array}
  \end{equation}
  where we choose the half-distance between symbols as

  \begin{equation}
    \begin{array}{ll}
      \label{eq:db}
      d = \frac{1}{M(|\mathcal{C}|-1)} = \frac{1}{M\left(2^{\frac{R}{M}}-1\right)},
    \end{array}
  \end{equation}
  where $R$ is the transmission rate. We denote the scaling of this distance as $d =\Theta\left( \frac{1}{N^{t}}\right)$ with exponent $t$.
  The transmitter may choose any $d$ satisfying $\epsilon<t<\frac{1}{2}$, where the lower limit is required to
  achieve the specified transmission rate scaling and the upper limit is required by the error analysis below.

  For this scheme, the transmitted rate $R$ must have the scaling $\Theta\left(N^{\epsilon}\log(N)\right)$ as we
  have assumed in the theorem, but this rate must also satisfy an equality relative to the number of subchannels times the bits per
  symbol of the constellation: $R=M\log_2|\mathcal{C}|$. Therefore $R$ depends on $d$
  via the cardinality of the constellation $|\mathcal{C}| = \Theta(N^{t-\epsilon})$.
  We can show both requirements of $R$ are compatible by writing the scaling of $R=M\log_2|\mathcal{C}|$ as follows

  \begin{equation}
   \begin{array}{ll}
    \label{eq:achievable}
    R &= \Theta\left(N^{\epsilon}\log_2(1+N^{t-\epsilon}) \right) \\
    &=  \Theta\left(N^{\epsilon}\log_2(N) \right)
   \end{array}
  \end{equation}
  which agrees with the assumption of the theorem and shows that the chosen exponent $t\geq\epsilon$
  affects the rate as a multiplying constant without changing its scaling with $N$. We next
  show that we can achieve a vanishing error probability as long as we choose
  $t<\frac{1}{2}$, i.e. when the chosen separation between nearest constellation
  points $d=\Theta(N^{-t})$ decays slower than $\Theta(N^{-\frac{1}{2}})$.

  Since $V_m$ approaches $1+A_m$ the centers of the decoding regions are a shifted version
  of the energy constellation symbols (shifted by the noise power), defined as
  $\mathcal{V}_{1} = \left(-\infty, 1+d \right],$ $\mathcal{V}_{k} = \left((2k-1)d +1, (2k+1)d +1\right]$
  for $2\leq k \leq |\mathcal{C}|-1,$ and $\mathcal{V}_{|\mathcal{C}|} =  \left( (2|\mathcal{C}|-1)d+1, \infty\right)$.

  For these decoding regions, we consider the union bound on the probability of error over all the subchannels as follows

  \begin{align}
    \label{eq:ub_prob_error}
    \begin{split}
      P_{\text{error}} &\leq  \sum_{m=1}^{M} \sum_{k \in \mathcal{C}} P_{\text{error},m} (\hat{k} \neq k) \\
      &\overset{(a)}{\approx} N^{\epsilon}
      N^{t-\epsilon} e^{-N (\min_k I_k(d))}\\
      &\overset{(b)}{\approx} N^{t}e^{-N^{1-2 t}},
    \end{split}
  \end{align}
  where $P_{\text{error},m}$ is the probability of error due to
  transmission in the $m^{th}$ subband, and $(a)$ and $(b)$ follow from the narrowband EM analysis in \cite{7404014}. Particularly,  step $(a)$ in \eqref{eq:ub_prob_error} upper bounds $P_{\text{error},m}$ using the error exponent function defined as

  \begin{equation}
    \begin{array}{ll}
      I_k(d) 
      \triangleq \lim\limits_{N \rightarrow \infty} \frac{-\log \left( \text{Prob} \left( \left|V_m-2(k-1)d-1\right| > d \right)\right)  }{N}.
    \end{array}
  \end{equation}
  In EM the Maximum Likelihood detector becomes the energy-based detector for the quadratic statistic $V_m$ \cite[Lemma 1]{7404014}. Finally step $(b)$ in \eqref{eq:ub_prob_error} follows from the limit
  \begin{equation}\lim_{d\rightarrow 0} \frac{I_k(d)}{d^2} =
  \Theta(1)
  \end{equation} for all $k$, which is proven in \cite[Lemma 3]{7404014}.

  Therefore, for $\epsilon<\frac{1}{2}$ the transmitter can choose from a family of constellations satisfying $\epsilon<t<\frac{1}{2}$ such that the rate is $\Theta(N^\epsilon\log(N))$ and the right hand side expression in \eqref{eq:ub_prob_error} goes to zero as $N$ grows, completing the proof of Theorem \ref{thm:achieve}.
\end{proof}

\begin{remark}
  Note that the proof of Theorem \ref{thm:achieve} offers a collection of achievable schemes that spans a family of constellations characterized by the parameter $t$ with $\epsilon<t<\frac{1}{2}$. Any constellation in this family achieves the specified rate scaling with vanishing error probability. Different constellations with different parameters $t$ are differentiated by a trade-off between their rate and error slope, in the sense that a greater $t$ does not modify rate scaling but increases a non-scaling constant multiplier of the achieved rate. Moreover a larger parameter $t$ decreases the exponent that characterizes the decay speed of the error probability as $N$ grows. Finally, the closer $\epsilon$ is to $\frac{1}{2}$ the narrower the range of valid constellations and for $\epsilon=\frac{1}{2}$ the error probability union bound becomes a constant and does not vanish.
 \end{remark}


\section{Numeric Examples, Discussion and Extensions}
\label{sec:examples}

\subsection{Intuition behind capacity scaling}
\label{sec:interpretation_proofs}
 Fig. \ref{fig:comparison} provides an intuitive interpretation of Theorems \ref{th:capacity} and \ref{thm:achieve}. The values of the output energy statistic conditioned on the input energy $p(V_m|A_m=a_m)$ are distributed around $a_m+1$ with most of the probability measure concentrated in a region of size $1/N^{\frac{1}{2}}$ around the center. In Fig. \ref{fig:bwlimited} the inputs are separated as $|a_m'-a_m|=\frac{1}{N^{\frac{1}{2}-\alpha}},$ for $a_m',a_m\in\mathcal{C}:a_m'\neq a_m$, the centers become closer at a slower pace than the tails shrink, and the transmitted symbols become easier to distinguish as $N$ increases (Theorem \ref{thm:achieve}). Conversely, in Fig. \ref{fig:pwoverspread} we have $|a_m'-a_m|=\frac{1}{N^{\frac{1}{2}+\alpha}},$ and, as $N$ grows, the centers of the two conditional output distributions become closer faster than their tails vanish, which is consistent with the interpretation of the proof of Lemma \ref{lemm:opt_mi} discussed in Section \ref{sec:UBenergy}.

  \begin{figure}
    \centering
\subfigure[Bandwidth-limited regime: Increasing bandwidth helps increase capacity]{
    \includegraphics[width=.7\columnwidth]{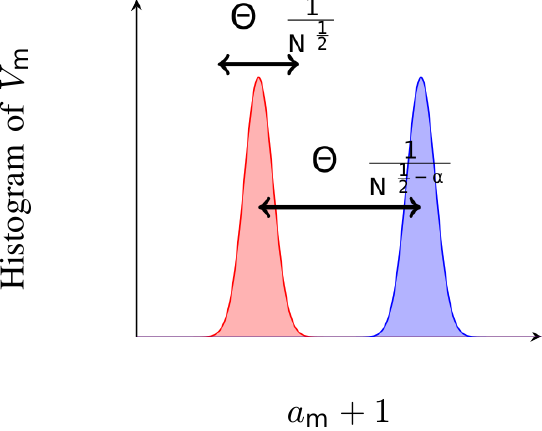}
      \label{fig:bwlimited}
}
\hspace{.1in}
\subfigure[Excessive bandwidth regime: Power is spread over an excessively large bandwidth]{
    \includegraphics[width=.7\columnwidth]{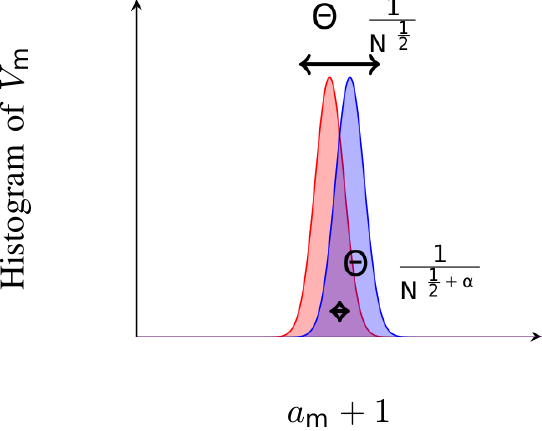}
      \label{fig:pwoverspread}
}
    \caption{Bandwidth-limited and excessive bandwidth capacity regimes}
    \label{fig:comparison}
  \end{figure}

  \subsection{Performance obtained with achievable scheme}

In Figure \ref{fig:EMperformance} we illustrate the Bit Error Rate (BER) and the rate for a series of EM numerical simulation examples. All simulated points in the figure are obtained by the Monte-Carlo method with $10^5$ bits per point. As a rule of thumb this number of Monte-Carlo simulations is sufficient to estimate BERs up to $10^{-4}$, however to conduct simulations both $B$ and $N$ must be integers, so we rounded the number of subchannels as $M=\lceil N^\epsilon \rceil$. We remark that the curves are not smooth due to the bumpy behavior of the ceiling function and not due to a lack of sufficient bits in simulation.

In Figure \ref{fig:EMperformanceBER} we observe that the BER vanishes
as $N$ grows for schemes with $\epsilon<\frac{1}{2}$, whereas the
error does not decay when $\epsilon>\frac{1}{2}$. Moreover, the error
vanishes more slowly when $\epsilon$ is close to the threshold, as
expected, by the bounds in the proof of Theorem \ref{thm:achieve}. In
Figure \ref{fig:EMperformanceRate} we see the transmitted rate of the
EM scheme for each scenario, which is simply computed as
$M\log |\mathcal{C}|$. The rates for all EM schemes grow consistently
with $M=\lceil N^\epsilon \rceil$, yet the rates with exponents
greater than $\frac{1}{2}$ cannot be sustained with an arbitrarily low
error probability for large enough $N$. Here the rates are given in
units of bits per channel coherence block, so for example if
$L_c/B_s=1$ ms then the unit is Kb/s.

\begin{figure}
\centering
\subfigure[BER]{
 \includegraphics[width=0.85\columnwidth]{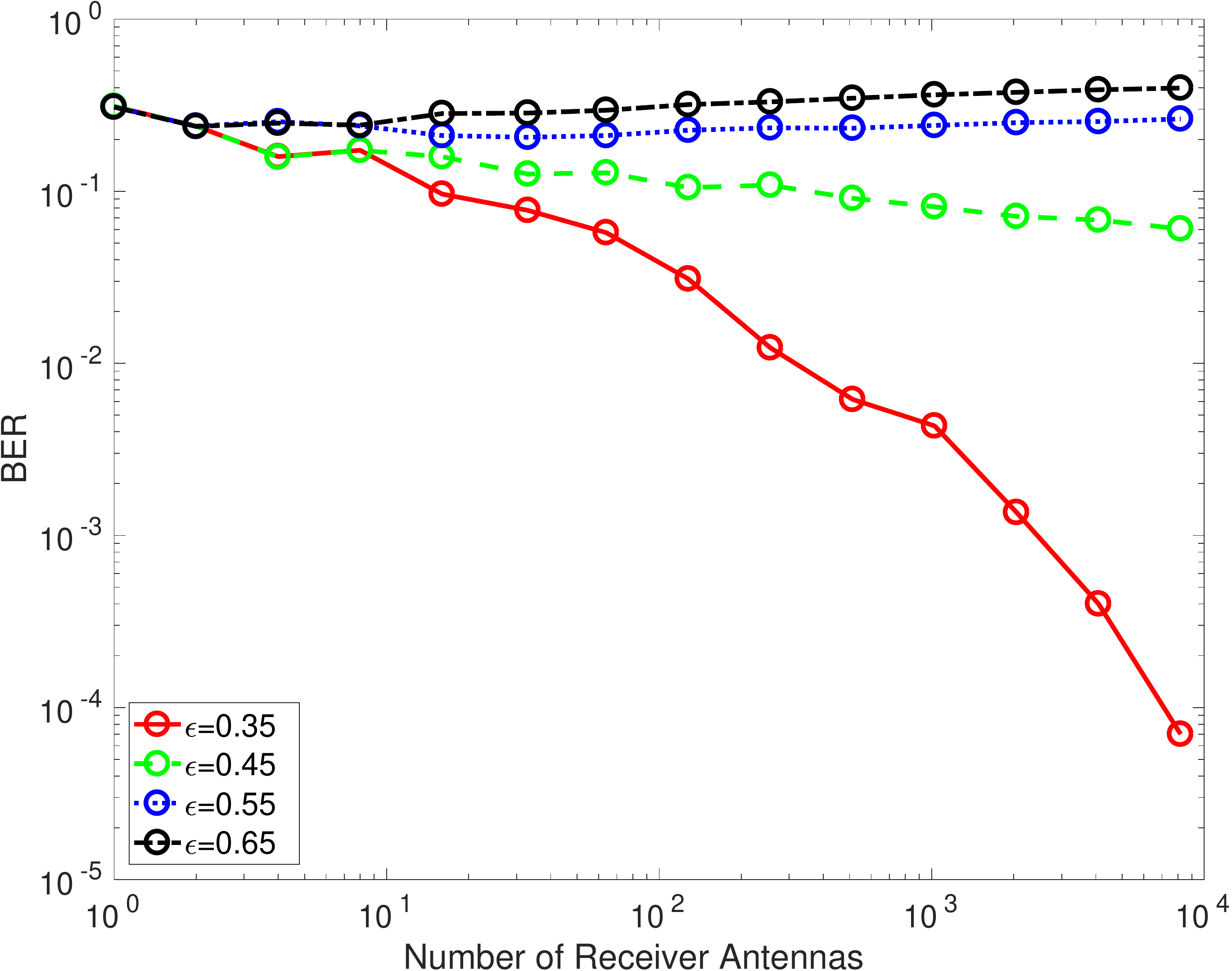}
 \label{fig:EMperformanceBER}
}
\subfigure[Transmitted Rate]{
 \includegraphics[width=0.85\columnwidth]{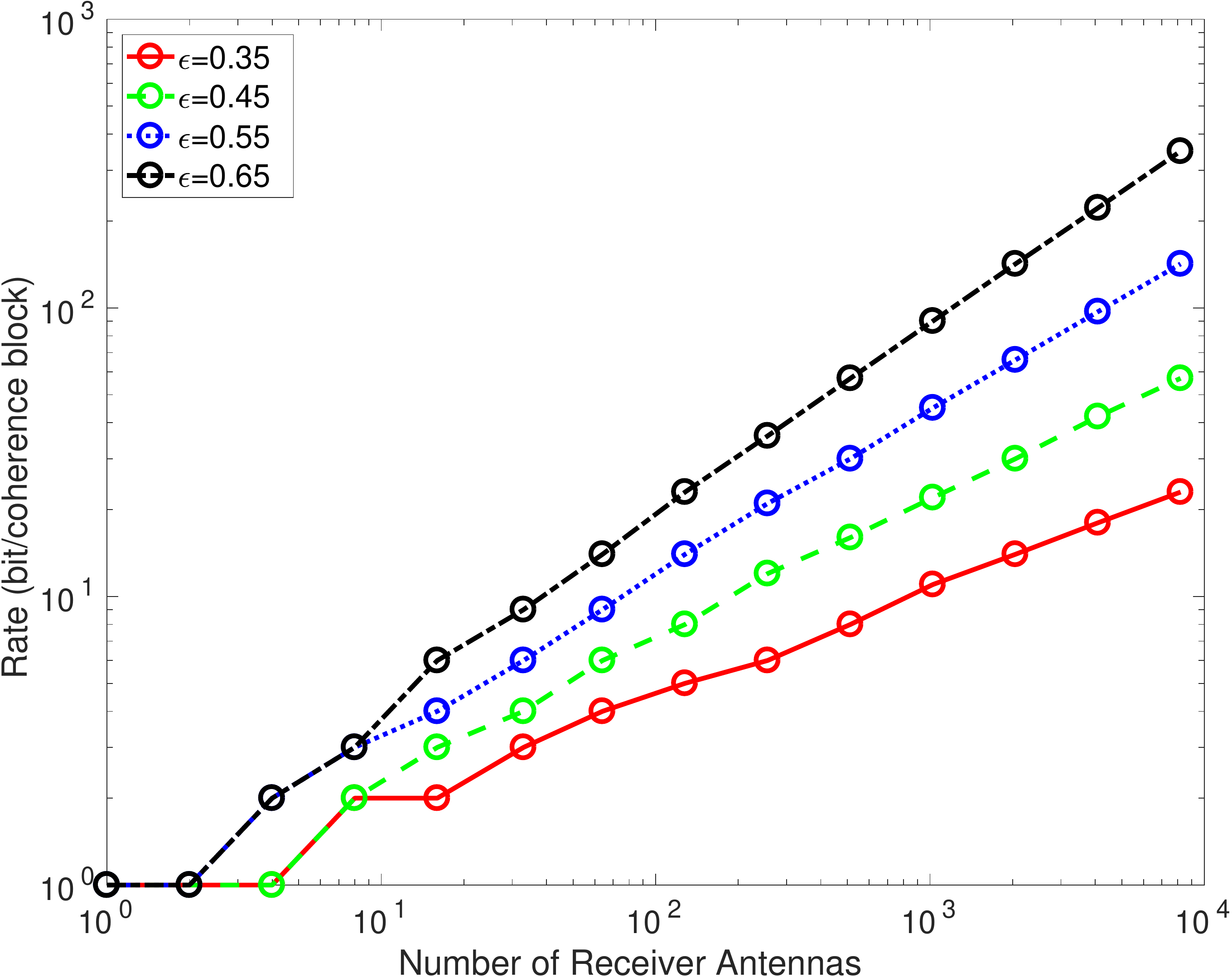}
 \label{fig:EMperformanceRate}
}
\caption{Simulated EM channel for different values of $\epsilon.$ Recall that $\epsilon$ captures the relative growth of available bandwidth (B) with the number of receiver antennas (N), i.e., $B = \Theta((N)^{\epsilon}).$}
 \label{fig:EMperformance}
\end{figure}

\subsection{Numerical plots of achievable rate with bandwidth}
\label{subsec:num_plots}

Figure \ref{fig:genie_vs_pilot} shows a numerical comparison between
the capacities obtained with the genie aided coherent receiver and the
pilot-assisted coherent receiver, as a fixed transmit power
is spread across more and more subchannels.  For the genie-aided
receiver, we assume that the subchannel gains at all antennas are
known precisely and the capacity is \eqref{eq:coherent_cap_erg}.
For the pilot-assisted receiver we perform MMSE channel
estimation using a known pilot symbol sequence with power equal to the average
transmit power per subchannel for 20\% of the available symbol times (i.e., the pilot overhead is 20\%);
and use the remaining time slots for data transmission.  For calculating the rates achievable 
by the pilot-assisted scheme with nearest-neighbor symbol decisions, we use the spectral efficiency analysis for block-fading channels developed in \cite{Jindal2010}, based on an effective SNR \cite[equation (47)]{Jindal2010} that takes into account the channel estimation MMSE \cite[equation (48)]{Jindal2010}. We modify our capacity expression \eqref{eq:coherent_cap_erg} to incorporate this effective SNR rather than the actual SNR, producing $R_{pilot-assisted}=\Ex{}{B B_s \log \left(1+\frac{\frac{P (\sum_{r=0}^{N-1}|H_{r, m}|^2)}{B }(1-MMSE)}{1+\frac{P (\sum_{r=0}^{N-1}|H_{r, m}|^2)}{B }MMSE}\right)}.$
We remark that nearest-neighbor symbol-by-symbol decisions are suboptimal under our channel model
with pilot-assisted channel estimation, but often employed in practical devices \cite{Jindal2010}. 
This rate calculation method illustrates that our scaling law result is applicable not just in asymptotically large regimes of bandwidth and numbers of antennas, but also in practical systems.

\begin{figure}
  \centering
  \includegraphics[width=0.95\linewidth]{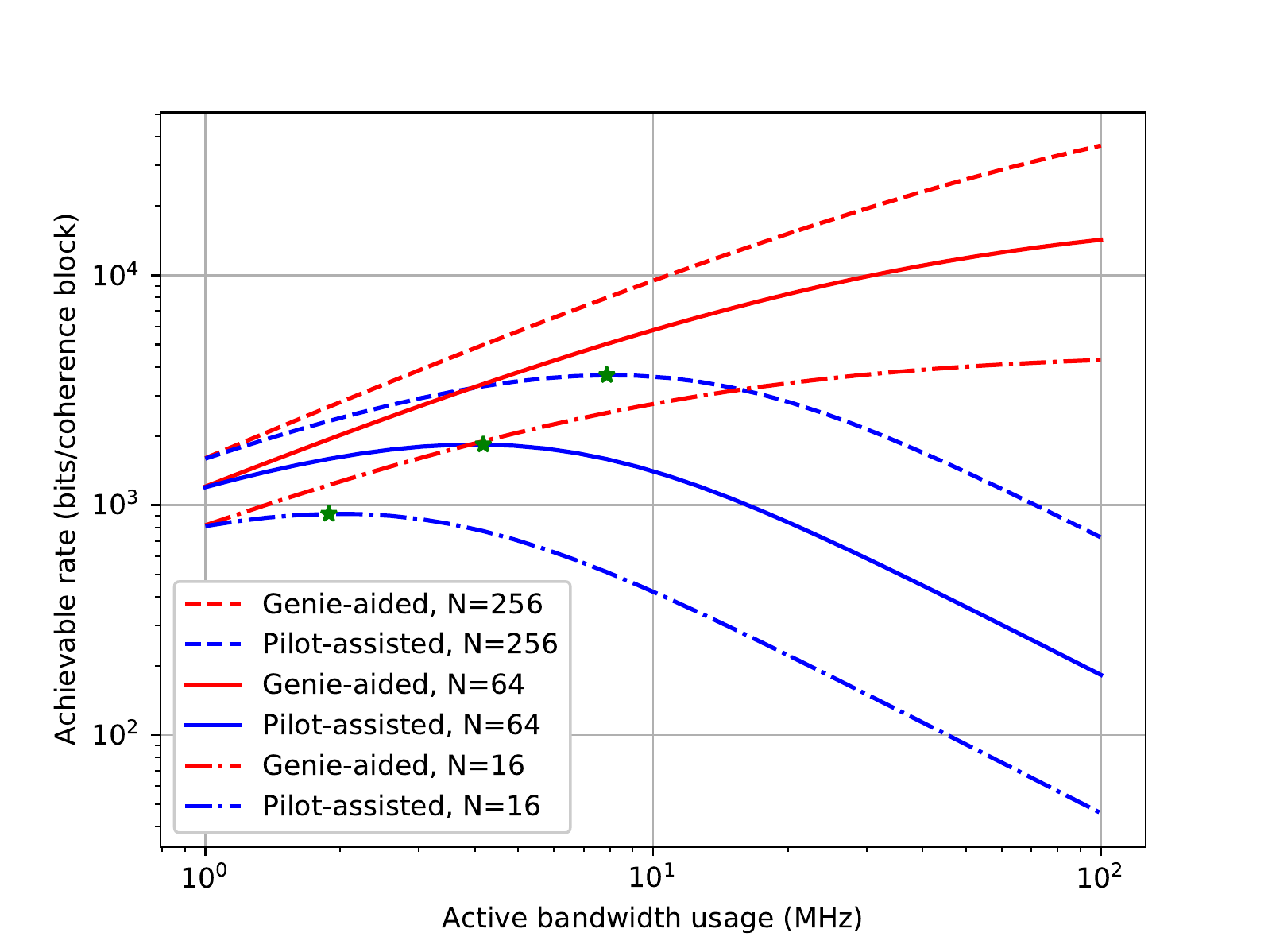}
  \caption{Capacity comparison between genie-aided and pilot-assisted coherent schemes.  $N$ is the number of receiver antennas.}
  \label{fig:genie_vs_pilot}
\end{figure}

We choose a sub-6
GHz band with a total available bandwidth of 100 MHz.  We keep the
coherence time fixed at 200 symbol times, with the subchannel
bandwidth (equal to the coherence bandwidth) being 1 MHz.  This
corresponds to a delay spread of (on the order of) 100 ns, which is
typical for small outdoor deployments.  Generally, both $B$ and $N$
can reach several hundreds in current standards. The 3GPP New Radio
specification has up to $100$ MHz bandwidth \cite{NRrelease15} in the
sub-6 GHz bands (more if we consider carrier aggregation). Arrays with dozens
to hundreds of antennas are practical in sub-6 GHz frequencies use; for example \cite{Bjornson2016}
shows a $3.7$GHz platform with 160 cross-polarized antennas.

The total noise power per subchannel is fixed, and the transmit
power is fixed such that if there is only 1 active subchannel, the
received SNR per subchannel per antenna is 3 dB.  In Fig. \ref{fig:genie_vs_pilot} we consider
a 16-antenna, a 64-antenna and a 256-antenna system.

From the figure, we observe that, while the genie-aided schemes are able to exploit all
available spectrum for capacity improvement, the pilot assisted schemes
are able to exploit only up to a certain bandwidth or, equivalently, a
certain number of subchannels.  If the transmit power is spread over
more than the threshold number of subchannels, the channel estimation
error dominates and drives down the achievable rate.

Furthermore, we represent with green stars in Fig. \ref{fig:genie_vs_pilot}  the threshold bandwidth for the
pilot-assisted receivers, that is, the maximum capacity values of the curves. The thresholds are located at $1.88$, $4.17$ and $7.88$ MHz for 16, 64 and 256 antennas, respectively. Thus for each increase in the number of antennas by a factor of four, we see that
the threshold bandwidth increases roughly by a factor of two.  This is consistent with our analytical derivations
concerning the joint capacity scaling behavior with a large number of antennas and a
large bandwidth, which reveals that for noncoherent receivers (of which
the pilot-assisted coherent receiver is a special case), there is a threshold bandwidth proportional to 
the square root of the number of antennas and any bandwidth
in excess of this threshold does not help in enhancing the capacity scaling.

\subsection{Subchannel bandwidth and coherence bandwidth}
\label{subsec:subchannel_bandwidth}
The gain associated with
the frequency selective wideband channel has been modeled to be
piecewise constant (and i.i.d. Rayleigh fading), with the gain being
constant over the subchannel bandwidth $B_s.$ Hence the subchannel bandwidth $B_s$ is equal
to the coherence bandwidth associated with the channel \cite{goldsmith2005book}. The coherence bandwidth
is a property of the propagation environment, which is usually related to the inverse of
the delay spread.





One possible practical interpretation of our system model is an OFDM system where the subcarrier separation
is exactly $B_s$ and the channel remains constant for $L_c$ OFDM symbols.
In many OFDM systems however, the OFDM subcarrier separation (inverse of
the OFDM symbol time) is much smaller than the coherence bandwidth to
ensure flatness of subcarrier gains \cite{goldsmith2005book}.  On the other hand, our channel model is fully
general and the coherence bandwidth is a property of the block-fading channel whereas the OFDM subcarrier separation
is only a property of one particular waveform. Under a careful interpretation of $B_s$ and $L_c$, the non-coherent capacity bounds in our analysis hold for any waveform that can be converted into our channel model using standard transformations \cite{fgomezUnified}. This includes certain cases of time-domain signaling, OFDM, SC-FDMA, or CDMA.

We defer to future work the analysis of channels where the adjacent subchannel gains are not i.i.d. and multipath channels that do not exhibit rich scattering - such as those in the mmWave bands.

\subsection{Overspreading in more general channel models}
\label{subsec:general_channel_models}

We note that overspreading also appears in channel
models that do not experience rich scattering.  In particular, for a fixed number of antennas,
Telatar and Tse showed that even sparse multipath channels experience
overspreading when the delays of the paths are not known
\cite{journals/tit/TelatarT00}.  Raghavan \textit{et al} found similar
results for capacity scaling with bandwidth under sparse multipath
channels \cite{Raghavan2007}.  Ferrante \textit{et al} revised this
framework to account for mmWave systems with Ricean fading, blockage and
oxygen absorption \cite{Ferrante2016}.

These works suggest that our results might be applicable to a broader
class of channel models than the i.i.d. Rayleigh block-fading model analyzed in this work.

\subsection{Extension to MIMO and Multi-user Channels}
\label{subsec:MUMIMO}

The following remarks briefly discuss how the main observations in this paper for the single-user SIMO channel can be applicable to a MIMO single user channel and to the uplink of cellular systems where multiple single-antenna users transmit to a multi-antenna BS forming a SIMO multiple-access channel (MAC).

\begin{remark}
Consider a massive MIMO channel with the number of transmit antennas $T$ that scales with $N$. Let us denote the transmit antenna index by $t\in\mathcal{T}\triangleq\{1\dots T\}$ and the exponent $T=\Theta(N^\gamma)$. A channel model generalizing \eqref{eq:channelmodel} can be written as follows
\begin{equation}
\label{eq:MIMOchan}
  \begin{split}
    Y_{r, m,\ell}&=\sum_{t=1}^{T}H_{r, m,t} X_{m,\ell,t}+Z_{r, m,\ell}.\\
  \end{split}
\end{equation}
with non-coherent capacity $\displaystyle  C_n(N,B,T)\triangleq \sup_{p(X_{\mathcal{B},\mathcal{L},\mathcal{T}})}\Inf{X_{\mathcal{B},\mathcal{L},\mathcal{T}}}{Y_{\mathcal{N},\mathcal{B},\mathcal{L}}}$. It is shown in \cite{Marzetta1999} that the mutual information does not increase when $T>L_c$, where $L_c$ is the constant coherence block length of the fading channel. Therefore  $\displaystyle C_n(N,B,T)\leq C_n(N,B,\min(L_c,T))$
. Applying the chain rule
we can write the upper bound  $C_n(N,B,T) \leq \min(T,L_c) C_n(N,B)=\Theta(N^{\min(\epsilon,\frac{1}{2})})$, and the lower bound $C_n(N,B,T)\geq C_n(N,B)$. Therefore, when $L_c$ is constant, $T>1$ can increase capacity at most by a constant but the exponent $\gamma$ does not affect the capacity scaling.
\end{remark}

\begin{remark}
 Let us assume a multi-user SIMO channel with $U$ users. From the classic analysis of the MAC channel \cite{cover2006elements} we have that for any set $\mathcal{U}\subset\{1\dots U\}$, the achieved rate of all the users $u\in\mathcal{U}$, denoted as $R_u$, must satisfy the sum-rate constraint $\sum_\mathcal{U} R_u\leq\Inf{X_{\mathcal{B},\mathcal{L},\mathcal{U}}}{Y_{\mathcal{N},\mathcal{B},\mathcal{L}}}$
where $\Inf{X_{\mathcal{B},\mathcal{L},\mathcal{U}}}{Y_{\mathcal{N},\mathcal{B},\mathcal{L}}}$ is the mutual information of a virtual MIMO channel formed by substituting $\mathcal{T}=\mathcal{U}$ in \eqref{eq:MIMOchan}. Therefore, the main observations of our analysis remain applicable to these constraints which, combined, enclose the capacity region of the multi-user SIMO MAC channel.
\end{remark}

\begin{figure*}[!b]
\normalsize
\vspace*{4pt}
\hrulefill
\setcounter{MYtempeqncnt}{\value{equation}}
\setcounter{equation}{26}
\begin{equation}
  \label{eq:1}
  \begin{split}
    \Inf{X_{0,0}}{Y_{\mathcal{N},0,0}}&=\Inf{\sqrt{\overline{A}_{0}}N^{\alpha + \frac{1}{2}}U_{m,\setL}}{Y_{\mathcal{N},0,0}}\\
    &=\Ex{\overline{A}_{0}}{\Ex{ Y_{\setN, 0, 0}|\overline{A}_{0}}{\log \left(\frac{p(Y_{\setN, 0, 0}|\overline{A}_{0} )}{p(Y_{\setN, 0, 0})}\right)}}\\
    &=  \sum_{k \in \mathbb{N}} \Ex{\overline{A}_{0}}{ \Ex{Y_{\setN, 0, 0}|\overline{A}_{0}}{ \log \left( \frac{p(Y_{\setN, 0, 0}|\overline{A}_{0})}{p(Y_{\setN, 0, 0})}\right)}} \\
    &\overset{(a)}{\leq}  \sum_{k \in \mathbb{N}} \Ex{\overline{A}_{0}}{ \mathbf{I}_{\overline{A}_{0} \in \setS_k}\Ex{Y_{\setN, 0, 0}|\overline{A}_{0}}{ \log \left( \frac{p(Y_{\setN, 0, 0}|\overline{A}_{0})}{\int_{s \in \setS_k}  p(Y_{\setN, 0, 0}|\overline{A}_0 = s)\mu_{\overline{A}_0}( ds)} \right)}}\\
    &=-\sum_{k \in \mathbb{N}} \Ex{\overline{A}_{0}}{ \mathbf{I}_{\overline{A}_{0} \in \setS_k}\Ex{Y_{\setN, 0, 0}|\overline{A}_{0}} {\log \left(\int_{s\in I_k} \frac{p(Y_{\setN, 0, 0}|\overline{A}_0 = s)}{p(Y_{\setN, 0, 0}|\overline{A}_{0})} \mu_{\overline{A}_0}(ds)\right) }}.
  \end{split}
\end{equation}
\setcounter{equation}{\value{MYtempeqncnt}}
\end{figure*}

\section{Conclusions and Future Work}
\label{sec:conclusion}

We have derived the capacity scaling laws governing the relation
between large bandwidth and number of receiver antennas in coherent
and non-coherent i.i.d. Rayleigh block-fading SIMO channels. We have
found that in a block-fading channel of a fixed coherence length, the
non-coherent capacity scales with the minimum of the number of
independent subchannels and the square root of the number of receive
antennas.  On the other hand, coherent capacity under perfect
genie-aided channel state information scales linearly with the
bandwidth.  We have also shown that the capacity scaling of the
non-coherent channel with a large bandwidth and large number of
antennas can be achieved by an energy modulation scheme with a
non-coherent receiver.

Our results shed light on the inherent difficulty of supporting uplink
wideband communications using a large antenna array at the receiver
under a transmit power constraint.  Our capacity scaling law
characterizations reveal that if the number of antennas is large, and
the channel realizations are not known apriori, using excessive bandwidth can
actually degrade the achievable rates.  This is fundamentally due to
the fact that with a larger bandwidth, and hence, a larger number of
subchannels, there is less power available for pilot-assisted channel
estimation at each subchannel.  This is in sharp contrast to what
might be expected from a coherent capacity analysis with genie-aided
perfect channel state information.

We finally point out that our results have focused on the scaling laws
of capacities in the joint asymptotic regime of large bandwidth and large
number of antennas. The capacity and capacity-achieving strategy of
non-coherent systems with a finite number of antennas and bandwidth is
unknown, and hence this work does not answer questions about the capacity
difference between coherent and noncoherent systems in this
regime. Characterizing various properties of the non-coherent channel
capacity and the capacity achieving transmission strategy remains an
interesting topic for future work.


\appendices

\section{Mutual information upper bound for  $L_c = 1$}
\label{app:lemm:separation}

In this appendix we prove Lemma \ref{lemm:opt_mi}.
Lemma \ref{lem:marzetta} tells us that the optimal input satisfies $X_{m,\setL}=\sqrt{A_m}U_{m,\setL}$ where $A_m\geq0$ and $U_{m,\setL}$ is a unitary isotropic vector of length $L_c$. For convenience we define a normalized amplitude distribution as $\overline{A}_{m} \triangleq (\sum_{\ell \in \setL}|X_{m, \ell}|^2) N^{\alpha + \frac{1}{2}}=A_mN^{\alpha + \frac{1}{2}},$ so that we have $\Ex{}{\overline{A}_{0}}= \Theta(1)$.

Since $L_c=1$ the set $\setL=\{0\dots L_c-1\}$ becomes just the first element, $\{0\}$. Therefore, we replace the subindex $\setL$ by $0$ in all symbols hereafter. For example, $U_{m,0}$ is a unitary isotropic vector of length $1$, which boils down to a uniform random phase rotation. $\overline{A}_{m}$ simplifies as $|X_{m, 0}|^2 N^{\alpha + \frac{1}{2}}=A_mN^{\alpha + \frac{1}{2}},$ and so on.

We want to upper bound the subchannel capacity function \eqref{eq:MIfun} with $\setL=\{0\}$ and the power allocation argument $\rho_m=\frac{P}{N^{\frac{1}{2}+\alpha}}$, which produces

  \begin{equation}f_{\mathrm{C}}(\frac{P}{N^{\frac{1}{2}+\alpha}})\triangleq\max_{p(X_{0,0}):\Ex{}{|X_{0,0}|^2} \leq  \frac{P}{N^{\frac{1}{2}+\alpha}}}\Inf{X_{0,0}}{Y_{\mathcal{N},0,0}}.
  \end{equation}
Here without loss of generality we have chosen the subchannel index $m=0$ to evaluate the mutual information, which takes the same value in all the ``active'' $M=N^{\frac{1}{2}+\alpha}\leq B$ subchannels.

For any $\xi <\alpha,$ we define the sets $\setS_k = [(k-1)N^{\alpha - \xi}, k N^{\alpha - \xi})$ for $k$ belonging to the set of natural numbers $k\in\mathbb{N}$.  We can verify that $ \cup_{k \in \mathbb{N}} \setS_k$ covers the positive real line. We upper bound the mutual information using the sets $\setS_k$ and $\overline{A}_{0}$ and the ``indicator function'' defined as

  \begin{equation}\mathbf{I}_{\overline{A}_0 \in \setS_k}=\begin{cases}
                                     1& a_0 \in \setS_k\\
                                     0& a_0 \notin \setS_k,\\
                                    \end{cases}
  \end{equation}
which leads to the upper bound on the mutual information integral given in \eqref{eq:1}.

\stepcounter{equation}


The inequality $(a)$ follows from the fact that $p(Y_{\setN, 0, 0}) \leq \int_{s \in \setS_k} p(Y_{\setN, 0, 0}|\overline{A}_0 = s) \mu_{\overline{A}_0}(ds),$ where $\mu_{X}(\mathcal{A})$ refers to the measure of set $\mathcal{A}$ under the distribution induced by the random variable $X.$ And the last step uses the negative of the logarithm so that we can introduce the term $p(Y_{\setN, 0, 0}|\overline{A}_{0})$ inside the integral over $ds$.

Say we denote the integration variable for the average $\Ex{\overline{A}_0}{.}$ in the above expression with the letter $a$. This step has allowed us to write the upper bound of the mutual information as an integral containing $\frac{p(Y_{\setN,0,0} |\overline{A}_0=s)}{p(Y_{\setN,0,0} |\overline{A}_0=a)}$ where $a$ and $s$ are treated as independent realizations of the random variable $\overline{A}_0$ at two different integrals.

\begin{figure*}[!b]
\normalsize
\vspace*{4pt}
\hrulefill
\setcounter{MYtempeqncnt}{\value{equation}}
\setcounter{equation}{31}
\begin{equation}
  \label{eq:7}
  \begin{split}
    \Inf{X_{0,0}}{Y_{\mathcal{N},0,0}}&=  \Ex{\overline{A}_{0}}{\Ex{ Y_{\setN, 0, 0}|\overline{A}_{0}}{\log \left(\frac{p(Y_{\setN, 0, 0}|\overline{A}_{0} )}{p(Y_{\setN, 0, 0})}\right)}} \\
    &\overset{(a)}{\leq}  -\sum_{k=1}^{\infty} \int_{a \in \setS_k}
    \Ex{Y_{\setN, 0, 0}|\overline{A}_0 = a}{ \log \int_{s\in \setS_k} \frac{1}{p_k}
      \frac{p(Y_{\setN, 0, 0}|\overline{A}_0 = s)}{p(Y_{\setN, 0, 0}|\overline{A}_0=a)} \mu_{\overline{A}_0}(ds)} \mu_{\overline{A}_0}(da)\\
    &\overset{(b)}{=}\sum_{k=1}^{\infty} - \int_{a \in \setS_k} \Ex{Y_{\setN, 0, 0}|a}{ \log \left( 1 -
        \Theta\left(\int_{s \in \setS_k} \frac{1}{p_k} T(s, G, a)
          \mu_{\overline{A}_0}(ds)\right)\right)} \mu_{\overline{A}_0}(da)\\
    &\overset{(c)}{\leq}\sum_{k=1}^{\infty} \int_{a\in \setS_k} \int_{s \in \setS_k} \Theta(N)
      \left(\overbw{s -a}\right) \left( 1 - \Ex{Y_{\setN, 0, 0}|a}{G} \right) \mu_{\overline{A}_0}(ds) \mu_{\overline{A}_0}(da)\\
    &\overset{(d)}{=}\sum_{k=1}^{\infty} \int_{a\in \setS_k} \int_{s \in \setS_k} N
      \left(\overbw{s -a}\right) \left( 1 - 1 - \overbw{a} \right) \mu_{\overline{A}_0}(ds) \mu_{\overline{A}_0}(da)\\
    &\overset{(e)}{=}\sum_{k=1}^{\infty} \int_{a\in \setS_k} \int_{s \in \setS_k}N^{-2\alpha}
      \left((a -s) a\right)  \mu_{\overline{A}_0}(ds) \mu_{\overline{A}_0}(da),\\
  \end{split}
\end{equation}
\setcounter{equation}{\value{MYtempeqncnt}}
\end{figure*}

We now show that in the limit $N\to\infty$ the conditional distributions $p(Y_{\setN, 0, 0}|\overline{A}_{0}=s)$ and $p(Y_{\setN, 0, 0}|\overline{A}_{0}=a)$ are similar for all $a,s \in \setS_k$.  In particular we expand the ratio as follows

\begin{equation}
  \label{eq:2}
  \begin{split}
    \frac{p(Y_{\setN, 0, 0}|\overline{A}_0 = s)}{p(Y_{\setN, 0, 0}|\overline{A}_0 = a)} &= \left( \frac{
        1+\overbw{a} }{
        1+\overbw{s}} \right)^N\times\\
        &e^{- (\sum_{r=0}^{N-1}|Y_{r, 0, 0}|^2 ) \left(\frac{1}{1+ \overbw{s}} - \frac{1}{1+\overbw{a}}\right)}.
  \end{split}
\end{equation}
Defining the difference $\Delta a = s-a,$ this may be written as

\begin{equation}
  \label{eq:3}
  \begin{split}
    \frac{p(Y_{\setN, 0, 0}|\overline{A}_0 = s)}{p(Y_{\setN, 0, 0}|\overline{A}_0 = a)} &= \left( 1 + \frac{\overbw{\Delta
        a}}{1+ \overbw{a}}
    \right)^{-N}
    \times\\
        &e^{-N G\left(\frac{-\overbw{\Delta a}}{\left(1+ \overbw{a})(1+ \overbw{a} + \overbw{\Delta a}\right)} \right)}
    \\    &
    \doteq e^{-N \overbw{\Delta a}(1-G) }.
  \end{split}
\end{equation}
where $G \triangleq \frac{\sum_{r=0}^{N-1} |Y_{r,0,0}|^2}{N},$ and we
use the result that $(1+f(N))^{t(N)} \doteq e^{f(N)t(N)}$ if $f(N)
\rightarrow 0, t(N) \rightarrow \infty,$ such that $f(N) t(N)
\rightarrow \infty.$ We now observe that, as $N \rightarrow \infty$,
by the central limit theorem, $\sqrt{N}(G - 1 - \overbw{a})$ converges
to a zero mean normal random variable independent of $a$ with a
variance independent of $N$.

We can now use \eqref{eq:3} to write the part inside the logarithm as follows

\begin{equation}
  \label{eq:5}
  \begin{split}
    &\int_{s\in \setS_k} \frac{1}{p_k}
    \frac{p(Y_{\setN, 0, 0}|\overline{A}_0 = s)}{p(Y_{\setN, 0, 0}|\overline{A}_{0} = a)} \mu_{\overline{A}_0}(ds)\\
    &\quad = \int_{s \in
      I_k} \frac{1}{p_k} e^{-T(s, G, \overline{A}_{0})} \mu_{\overline{A}_0}(ds) \\
    &\quad = 1 - \Theta\left(\int_{s \in \setS_k} \frac{p(s)}{p_k} T(s, G, a)
    ds\right),
  \end{split}
\end{equation}
where $ T(s, G, a) = \left( N \overbw{(s-a)} (1-G) \right).$ The last
step follows from the CLT and the observation that

  \begin{equation}\lim_{N\rightarrow \infty}
T(s, G, a) = \lim_{N \rightarrow \infty} N \overbw{s-a} \overbw{a} =  0.
  \end{equation}

Writing down all the steps in the integral $\Ex{\overline{A}_0}{.}$ with variable $a \in \setS_k$ we get \eqref{eq:7}, 
where $(a)$ is \eqref{eq:1} and $(b)$ comes from \eqref{eq:2},\eqref{eq:3},\eqref{eq:5}. In $(c)$ we use $\log(1-x) = -\Theta(x)$ as $x\rightarrow 0$. In $(d)$ we use the average of $G$ according to CLT, and $(e)$ is just cleanup.
\stepcounter{equation}

Using $\int \mu_{\overline{A}_0}(ds) = 1,$ $(\int a  \mu_{\overline{A}_0}(da))(\int s \mu_{\overline{A}_0}(ds)) = \Theta(1),$ and $|s-a| < \Theta(N^{\alpha - \xi}),$ due to the assumption that $a,s
\in \setS_k,$ we get that,

  \begin{equation}    \Ex{\overline{A}_{0}}{\Ex{ Y_{\setN, 0, 0}|\overline{A}_{0}}{\log \left(\frac{p(Y_{\setN, 0, 0}|\overline{A}_{0} )}{p(Y_{\setN, 0, 0})}\right)}} \leq \Theta(N^{-\alpha - \xi}).
  \end{equation}

\begin{figure*}[!b]
\normalsize
\vspace*{4pt}
\hrulefill
\setcounter{MYtempeqncnt}{\value{equation}}
\setcounter{equation}{38}

\begin{align}
\begin{split}
  \eqref{eq:39}&\doteq \left(1 + \overbw{b - a}\right)^N e^{ \sum_{r=0}^{N-1}
      ((|\sum_{\ell=0}^{L_c-1} Y_{r, 0, \ell} u_{0,
          \ell}^*|^2)(\overbw{a}(1- \overbw{a}))  - (|\sum_{\ell=0}^{L_c-1} Y_{r, 0, \ell} u_{0,
          \ell}^*|^2)(\overbw{b}(1 - \overbw{b}) ))}\\
  &\doteq e^{N\overbw{b - a}} e^{ \sum_{r=0}^{N-1}
      ((|\sum_{\ell=0}^{L_c-1} Y_{r, 0, \ell} u_{0,
          \ell}^*|^2)\overbw{a}  - (|\sum_{\ell=0}^{L_c-1} Y_{r, 0, \ell} v_{0,
          \ell}^*|^2)\overbw{b})}.\\
\end{split}
\label{eq:intermediate}
\end{align}
\setcounter{equation}{\value{MYtempeqncnt}}
\end{figure*}
  
Since this must be satisfied for any $\xi\leq \alpha$ the tightest bound is choosing $\xi=\alpha$ producing

  \begin{equation}
  \begin{split}
   \Inf{X_{0,0}}{Y_{\mathcal{N},0,0}}&= \Ex{\overline{A}_{0}}{\Ex{ Y_{\setN, 0, 0}|\overline{A}_{0}}{\log \left(\frac{p(Y_{\setN, 0, 0}|\overline{A}_{0} )}{p(Y_{\setN, 0, 0})}\right)}}\\
   &\leq \Theta(N^{-2\alpha}).
  \end{split}
  \end{equation}

\section{Mutual information upper bound for $L_c \geq 1$}
\label{app:lemm:separation_block}

Lemma \ref{lem:marzetta} specifies that the optimal input satisfies $X_{m,\setL}=\sqrt{A_m}U_{m,\setL}$ where $A_m>0$ and $U_{m,\setL}$ is a unitary isotropic vector of length $L_c$. It is known that

  \begin{equation}\sum_{\ell\in\setL}|U_{m,\ell}|^2=1.
  \end{equation}

We next write down the distribution of the output for subchannel $m=0$, denoted $Y_{\setN, 0, \setL}$, given a fixed input $X_{0, \setL}.$ We have

\begin{align}
\begin{split}
&p(Y_{\setN, 0, \setL}|X_{0, \setL}) = \\
&\quad= \frac{ \int e^{-\sum_{r=0}^{N-1} \sum_{\ell=0}^{L_c - 1}|Y_{r, 0, \ell} - H_{r,0} X_{0, \ell}|^2 } e^{-\sum_{r=0}^{N-1}|H_{r, 0}|^2} dH_{\setN}}{\pi^{N L_c}}\\
 &\quad= \frac{ \int e^{-\sum_{r=0}^{N-1} \sum_{\ell=0}^{L_c - 1}|Y_{r, 0, \ell} - H_r \sqrt{A_0} U_{0, \ell}|^2 } e^{-\sum_{r=0}^{N-1}|H_r|^2} dH_{\setN}}{\pi^{N L_c}}\\
 &\quad= \frac{e^{(-\sum_{r=0}^{N-1} \sum_{\ell=0}^{L_c - 1}|Y_{r, 0, \ell}|^2) + \sum_{r=0}^{N-1} \frac{|\sum_{\ell=0}^{L_c-1} Y_{\ell} \sqrt{A_0} U_{0, \ell}^*|^2}{1 + A_0} }}{\pi^{N  L_c}(1+A_0)^N}  \\
\end{split}
\label{eq:block_conditional}
\end{align}

Similar to the proof for $L_c = 1,$ we now proceed to define a normalized amplitude distribution as $\overline{A}_{0} \triangleq (\sum_{\ell \in \setL}|X_{0, \ell}|^2) N^{\alpha + \frac{1}{2}}=A_0N^{\alpha + \frac{1}{2}}.$ Observe that

  \begin{equation} \Ex{}{\overline{A}_0} = L_c = \Theta(1).
  \end{equation}
We now proceed to find the ratio of two conditional
distributions for two different realizations of $X_{0, \setL}.$ As in the case $L_c=1$, such a ratio will be found inside an average where $a$ and $b$ (respectively $u_{0,\setL}$ and $v_{0,\setL}$) will be integration variables treated as independent realizations of the variable $\overline{A}_0$ (respectively $\U_{0,\setL}$) for integration purposes. We have:

\begin{align}
\label{eq:39}
\begin{split}
  &\frac{p(Y_{\setN, 0, \setL}|\overline{A}_0 = a, U_{0, \setL} =
    u_{0, \setL})}{p(Y_{\setN, 0, \setL}|\overline{A}_0 = b, U_{0, \setL} =
    v_{0, \setL})} =\\
  &\quad= \frac{(1 + \overbw{b})^N}{(1+\overbw{a})^N}
  \frac{e^{ \sum_{r=0}^{N-1} \frac{|\sum_{\ell=0}^{L_c-1} Y_{r, 0, \ell} \sqrt{\overbw{a}}
        u_{0, \ell}^*|^2}{1 + \overbw{a}} }}{e^{\sum_{r=0}^{N-1}
      \frac{|\sum_{\ell=0}^{L_c-1} Y_{r, 0, \ell} \sqrt{\overbw{b}} v_{0, \ell}^*|^2}{1 +
        \overbw{b}}}}.
\end{split}
\end{align}

We now make an observation about the exponents.  We note that, if $a,b
\in \setS_k$ for a fixed $k,$ $\overbw{a}, \overbw{b} \rightarrow 0$
as $N \rightarrow \infty, $ and we thus obtain \eqref{eq:intermediate} where the notation $f(N)\dot{=}g(N)$ represents $\lim_{N\rightarrow\infty}f(N)-g(N)=0$.
\stepcounter{equation}

We now observe the following: under the distribution $p(Y_{\setN, 0, \setL}|\overline{A}_0 = b, U_{0, \setL} =
    v_{0, \setL})$, we have

  \begin{equation}|\sum_{\ell=0}^{L_c-1} Y_{r, 0, \ell} u_{0,
          \ell}^*| = |\sqrt{b \overbw{|\sum_{\ell \in \setL}u_{0, \ell}^* v_{0, \ell}|^2}} + \tilde{Z_0}|,
  \end{equation} and

  \begin{equation}\sum_{\ell=0}^{L_c-1} Y_{r, 0, \ell} v_{0,
   \ell}^* = \sqrt{\overbw{b}} + \hat{Z_0},
  \end{equation} where $\tilde{Z_0}$ and
 $\hat{Z_0}$ are zero mean complex Gaussians of unit second moment, with $\Ex{}{\tilde{Z_0} \hat{Z_0}^*} = \sum_{\ell=0}^{L_c-1} u_{0, \ell} v_{0,\ell}^* \triangleq t(u_{0,\setL}, v_{0,\ell}).$
 By concentration of measure, we have that

\begin{equation}
 \begin{split}
  G &\triangleq \begin{array}{clc}\frac{1}{N} \sum_{r=0}^{N-1} \Big(&(|\sum_{\ell=0}^{L_c-1}
  Y_{r, 0, \ell} u_{0, \ell}^*|^2)\overbw{a} &\\&\qquad - (|\sum_{\ell=0}^{L_c-1}
  Y_{r, 0, \ell} v_{0,
    \ell}^*|^2)\overbw{b}&\Big)\end{array}\\
  &\sim \overbw{a - b} (1 + \frac{m(u_{0,\setL}, v_{0,\ell})}{ \sqrt{N}}\nu)
  \\ &\qquad+ \overbw{a} \overbw{b} |\sum_{\ell \in \setL}u_{0, \ell}^* v_{0, \ell}|^2 
  \\ &\qquad- (\overbw{b})^2+ \text{lower order terms,}
 \end{split}
\end{equation}

where $\sim$ stands for ``distributed as'', $\nu$ (defined as
$\frac{\sqrt{N}(G - 1)}{m(u_{0,\setL}, v_{0,\setL})}$) is distributed
as a Gaussian random variable independent of $u_{0, \setL}$ and $v_{0,
  \setL},$ and

  \begin{equation} m^2(u_{0,\setL}, v_{0,\setL}) = (a + b)^2 - 2 a b
\Ex{}{|\tilde{Z_0}\hat{Z_0}|^2} .
  \end{equation} Observe that, when $t(u_{0,\setL},
v_{0,\ell}) = 1, $ i.e., when $u_{0, \setL} = v_{0, \setL},$ we have
$m^2(u_{0,\setL}, v_{0,\ell}) = (a-b)^2,$ and when $t(u_{0, \setL},
v_{0, \setL}) = 0,$ $m^2(u_{0,\setL}, v_{0,\setL}) = a^2 + b^2.$
Thus

  \begin{equation}(a-b)^2 \leq m^2(u_{0, \setL}, v_{0, \setL}) \leq a^2 + b^2,
  \end{equation}
and the maximum variation in the value of $m^2(u_{0, \setL}, v_{0,
  \setL})$ across all values of $u_{0, \setL}$ is bounded by
$\Theta(ab).$

\begin{figure*}[!b]
\normalsize
\vspace*{4pt}
\hrulefill
\setcounter{MYtempeqncnt}{\value{equation}}
\setcounter{equation}{45}

\begin{align}
\begin{split}
  &\frac{\log(\int \frac{p(Y_{\setN, 0, \setL}|\overline{A}_0 = a, U_{0, \setL} = u_{0,
      \setL})}{p(Y_{\setN, 0, \setL}|\overline{A}_0 = b, U_{0, \setL} =
    v_{0, \setL})} \mu_{U_{0, \setL}}(du_{0, \setL}))}{ \max_{u_{0, \setL}} N\left(\overbw{a-b} \frac{m(u_{0,\setL}, v_{0,\setL})\nu}{\sqrt{N}} - \overbw{b}\left(\overbw{b}
    - \overbw{a}|\sum_{\ell \in \setL}u_{0, \ell}^* v_{0, \ell}|^2\right)\right)} \rightarrow 1.\\
\end{split}
\label{eq:intermediate_mod}
\end{align}

\vspace*{4pt}
\hrulefill

\setcounter{equation}{47}

\begin{align}
\label{eq:48}
\begin{split}
  \Inf{X_{0,0}}{Y_{\mathcal{N},0,\setL}}&=\Ex{\overline{A}_0, U_{0, \setL}}{\Ex{Y_{\setN, 0, \setL}|\overline{A}_0, U_{0, \setL}}{\log\left(\frac{p(Y_{\setN, 0, \setL}|\overline{A}_0, U_{0, \setL})}{p(Y_{\setN, 0, \setL})}\right)}}\\
  &\leq \sum_{k = 1}^{\infty}\Ex{\overline{A}_0, U_{0, \setL}}{\mathbf{I}_{\overline{A}_0 \in \setS_k}\Ex{Y_{\setN, 0, \setL}|\overline{A}_0, U_{0, \setL}}{\log\left(\frac{p(Y_{\setN, 0, \setL}|\overline{A}_0, U_{0, \setL})}{ \int_{a\in \setS_k}  \int_u p(Y_{\setN, 0, \setL}|a, u) \mu_{U_{0,\setL}}( du) \mu_{\overline{A}_0}( da)}\right)}}\\
  &\overset{(a)}{=} \sum_{k = 1}^{\infty}\Ex{\overline{A}_0, U_{0, \setL}}{\mathbf{I}_{\overline{A}_0 \in \setS_k}\Ex{Y_{\setN, 0, \setL}|\overline{A}_0, U_{0, \setL}}{-\int_{a\in \setS_k} M(a, \overline{A}_0, \nu)\mu_{\overline{A}_0}(da)}}\\
  &\overset{(b)}{=} \sum_{k\in \mathbb{N}} \int_{b \in \setS_k} \int_{a\in \setS_k} \left(\frac{b^2}{N^{ 2\alpha}} - \frac{1}{2} \frac{ab}{N^{2\alpha}}\right)   \mu_{\overline{A}_0}(da) \mu_{\overline{A}_0}(db)\\
&\overset{(c)}{=} \Theta(N^{-2 \alpha}).
\end{split}
\end{align}

\setcounter{equation}{\value{MYtempeqncnt}}
\end{figure*}

We can then simplify \eqref{eq:intermediate} to give

\begin{equation}
\label{eq:intermediate2}
\begin{split}
  &\frac{p(Y_{\setN, 0, \setL}|\overline{A}_0 = a, U_{0, \setL}=
    u_{0, \setL})}{p(Y_{\setN, 0, \setL}|\overline{A}_0 = b, U_{0, \setL} =
    v_{0, \setL})}
   \\
&\quad \doteq e^{\overbw{(b - a)N + \sum_{r=0}^{N-1}
      \left(a(|\sum_{\ell=0}^{L_c-1} Y_{r, 0, \ell} u_{0,
          \ell}^*|^2)  - b(|\sum_{\ell=0}^{L_c-1} Y_{r, 0, \ell} v_{0,
          \ell}^*|^2)\right)}}\\
&\quad\doteq e^{N \left(\overbw{b-a}+G\right)}\\
&\quad\doteq e^{N\left(\overbw{a-b} \frac{m(u_{0,\setL}, v_{0,\ell})\nu}{\sqrt{N}} - \overbw{b}\left(\overbw{b-a|\sum_{\ell \in \setL}u_{0, \ell}^* v_{0, \ell}|^2}\right)\right)}.
\end{split}
\end{equation}
From Laplace's principle, we then get \eqref{eq:intermediate_mod}
as $N \rightarrow \infty.$
\stepcounter{equation}
Observe that the denominator is independent
of $v_{0, \setL}$ in the limit as $N \rightarrow \infty.$   We now observe that

\begin{align}
  \label{eq:exponent_simpl}
  \begin{split}
    & \begin{array}{clc}\max_{u_{0, \setL}}N\Big(&\overbw{a-b} \frac{m(u_{0,\setL}, v_{0,\setL})\nu}{\sqrt{N}} &\\&\qquad - \overbw{b}\left(\overbw{b
    - a|\sum_{\ell \in \setL}u_{0, \ell}^* v_{0, \ell}|^2}\right)&\Big)\end{array} \\
&\triangleq M(a, b, \nu)\\
    & =
    \begin{cases}
     N\left( \overbw{a-b} \frac{|a-b| \nu}{\sqrt{N}}  - \overbw{b}\left(\overbw{b-a}\right)\right) &\text{ if } (a-b)\nu < 0, \\
      N\left(\overbw{a-b} +\frac{\sqrt{a^2 + b^2} \nu}{\sqrt{N}}  - \left(\overbw{b}\right)^2\right) &\text{ if } (a-b)\nu > 0,
    \end{cases}
  \end{split}
\end{align}
for a large enough $N.$

We now bound the mutual information for subchannel
$0$. Using the same definition of $\setS_k$ that we had in Appendix
\ref{app:lemm:separation}, we have that the mutual information in the
subchannel 0 is given by \eqref{eq:48}
\stepcounter{equation}
where $(a)$ follows from \eqref{eq:exponent_simpl} and the observation
that $\nu$ in \eqref{eq:exponent_simpl} is independent of $U_{0,
  \setL}$ in the limit.  $(b)$ follows from the fact that
$\Ex{\nu|b}{M(a, b, \nu)} = - \frac{b^2}{N^{ 2\alpha}} + \frac{1}{2}
\frac{ab}{N^{ 2\alpha}} $.  $(c)$ is established exactly the same way
as in Appendix \ref{app:lemm:separation}. This concludes the proof.

\end{document}